\documentclass[final,3p,times,authoryear]{elsarticle}

\usepackage{amssymb}
\usepackage{amsmath}

\usepackage{xargs}                      
\usepackage[pdftex,dvipsnames]{xcolor}  
\usepackage[colorinlistoftodos,prependcaption,textsize=tiny]{todonotes}
\newcommandx{\unsure}[2][1=]{\todo[linecolor=red,backgroundcolor=red!25,bordercolor=red,#1]{#2}}
\newcommandx{\change}[2][1=]{\todo[linecolor=blue,backgroundcolor=blue!25,bordercolor=blue,#1]{#2}}
\newcommandx{\info}[2][1=]{\todo[linecolor=OliveGreen,backgroundcolor=OliveGreen!25,bordercolor=OliveGreen,#1]{#2}}
\newcommandx{\improvement}[2][1=]{\todo[linecolor=Plum,backgroundcolor=Plum!25,bordercolor=Plum,#1]{#2}}

\usepackage{amssymb}

\usepackage{amssymb,amsthm}
\usepackage{optidef}
\usepackage{multirow}
\usepackage[normalem]{ulem}
\useunder{\uline}{\ul}{}
\usepackage[]{hyperref}
\usepackage{placeins}

\newtheorem{lemma}{Lemma}
\newtheorem{prop}{Proposition}
\newtheorem{ex}{Example}

\newtheorem{theo}{Theorem}

\usepackage{booktabs}
\def\1{{1\kern-.3468em{ 1}}}

\usepackage{xcolor}

\usepackage{booktabs}

\usepackage{xspace}

\usepackage{subcaption}

\usepackage{amsfonts, amsmath, amssymb, mathtools, latexsym, bbm}

\usepackage{amsthm}

\newcommand{\hcunicamp}{HC - Unicamp}
\newcommand{\rqmodule}{Queue Management}
\newcommand{\overtimemodule}{Overtime and Cancellation}

\newcommand{\otschedmodule}{Operating Theatre Scheduling}

\newcommand{\rthreshold}{R}
\newcommand{\batchsizeq}{Q}
\newcommand{\lowq}{\underline{\batchsizeq}} 
\newcommand{\upq}{\overline{\batchsizeq}} 
\newcommand{\conflevelrq}{\alpha_{rq}}
\newcommand{\waittimereq}{\tau}

\newcommand{\stocprocperiod}{t}
\newcommand{\stocprocqueue}[1]{X_{#1}}
\newcommand{\stocprocbatches}[1]{Y_{#1}}
\newcommand{\stocprocremain}[1]{Z_{#1}}
\newcommand{\stocprocremainstate}{z}

\newcommand{\rvdemand}{D}
\newcommand{\drawnrvdemand}{d}
\newcommand{\roundbatches}{A}
\newcommand{\maxnumberbatch}{M}

\newcommand{\supportrq}{\Omega_{rq}}

\newcommand{\statespace}{S}

\newcommand{\steadystate}[1]{\pi_{#1}}
\newcommand{\transmatrix}[1]{P_{#1}}
\newcommand{\transmatrixval}[2]{p_{#1}^{#2}}

\newcommand{\completedsurgsidx}{k}
\newcommand{\kmax}{\ensuremath{\mathit{KMax_{\specialityidx}}\xspace}}

\newcommand{\blocklength}{T}
\newcommand{\safetythreshold}{t^{*}_{\specialityidx}}
\newcommand{\probstoppingthreshold}{\varepsilon_{\text{stop}}}
\newcommand{\plannedsurgs}{n}
\newcommand{\optplannedsurgs}{n^*}

\newcommand{\excesscost}{c}
\newcommand{\overtimepenalty}{p}
\newcommand{\cancelledsurgeries}{\ell}

\newcommand{\rvsurgdur}{D_{\specialityidx}}

\newcommand{\cumulativeduration}[1]{cd_{\specialityidx}^{(#1)}}
\newcommand{\numcompletedsurgs}{K_{\specialityidx}}

\newcommand{\probdistplan}{p_{\specialityidx}}
\newcommand{\cumulpdp}{F_{p_{\specialityidx}}}

\newcommand{\supportnv}{\Omega_{dur}}

\newcommand{\convolve}{\text{Convolve}}
\newcommand{\calccdf}{\text{CalculateCDF}}

\newcommand{\gain}[1]{G(#1)}
\newcommand{\cancellationcost}{C}

\newcommand{\marginalvalue}{g}
\newcommand{\monetarycancconst}{\kappa}
\newcommand{\cancellationfunction}{\ensuremath{\mathit{cost}}}

\newcommand{\survivalvector}{q_{\specialityidx}} 

\newcommand{\targetq}{{Q'}}

\newcommand{\blockdemand}{B}
\newcommand{\lowerblockdemand}{\underline{\blockdemand}_{\specialityidx}}
\newcommand{\upperblockdemand}{\overline{\blockdemand}_{\specialityidx}}

\newcommand{\blockset}{\mathcal{B}}
\newcommand{\blockmset}{\mathcal{B}^{\prime}}
\newcommand{\blockidx}{b}

\newcommand{\roomset}{OT}

\newcommand{\roomidx}{ot}

\newcommand{\specialityset}{\mathcal{S}}
\newcommand{\specialitymset}{\specialityset_{seq}}   
\newcommand{\specialityprset}{\specialityset_{prio}} 
\newcommand{\specialitytxset}{\specialityset_{tx}}
\newcommand{\specialityidx}{s}

\newcommand{\planninghorizonset}{\mathcal{D}}
\newcommand{\planninghorizonidx}{\mathit{d}}

\newcommand{\timeperiodset}{\mathcal{M}}

\newcommand{\blockvar}{x_{\roomidx,\blockidx,\specialityidx}}

\newcommand{\precvar}{y_{\roomidx,\planninghorizonidx,\specialityidx}}
\newcommand{\integervar}{w_{\blockidx,\specialityidx}}
\newcommand{\teamslackvar}{z_{\blockidx,\specialityidx}}
\newcommand{\anestslackvar}{a_{\blockidx}}

\newcommand{\conflevelot}{\alpha_{preproc}}

\newcommand{\numbersanests}{\xi}
\newcommand{\needanest}{\psi}
\newcommand{\specteams}{\phi}
\newcommand{\days}{|\planninghorizonset|}

\newcommand{\allowedrooms}{\lambda}

\newcommand{\anestconstant}{\gamma}

\newcommand{\modelname}{LT-OTS}

\usepackage{algorithm}
\usepackage{algpseudocode}

\usepackage{tikz}
\usetikzlibrary{arrows.meta, positioning, fit, calc, shapes.symbols}

\journal{European Journal of Operational Research}

\begin{document}

\begin{frontmatter}



\title{Integrated Framework for Long-term Elective Surgery Management under Uncertainty: From Strategic to Tactical Planning} 



\author[1]{João P. F. da Silva\corref{cor1}} \ead{joao.silva@ic.unicamp.br} 
\author[1]{Lucas Bortoletto} \ead{l173422@dac.unicamp.br}
\author[1]{Rafael C. S. Schouery} \ead{rafael@ic.unicamp.br}
\author[2,3]{Edilson F. Arruda} \ead{e.f.arruda@southampton.ac.uk}
\author[4]{Fabricio F. Santos} \ead{ffsantos@unicamp.br}
\author[4]{Ana Paula C. I. Alves} \ead{anapcia@unicamp.br}

\cortext[cor1]{Corresponding author. Tel.: (+44)(0)2380 597146.}

\affiliation[1]{organization={Institute of Computing, University of Campinas},
            city={Campinas},
            state={SP},
            country={Brazil}}

\affiliation[2]{organization={Centre for Healthcare Analytics, University of Southampton},
            city={Southampton},
            country={United Kingdom}}

\affiliation[3]{organization={University of Southampton Business School},
            city={Southampton},
            country={United Kingdom}}

\affiliation[4]{organization={Hospital de Clínicas, University of Campinas},
            city={Campinas},
            state={SP},
            country={Brazil}}

\begin{abstract}
In public healthcare systems, elective surgery waiting lists are a persistent management challenge, as hospitals must balance limited operating theatre capacity with uncertain and evolving demand.
Long-term planning for these lists requires more than assigning operating theatre (OT) time to medical specialities, as isolated allocation decisions may fail to stabilise waiting lists over time. 
Such planning must account for uncertainty in patient arrivals in the queue and surgery durations, which are often neglected in OT scheduling models. 
In this context, we propose a novel integrated framework for elective surgery management under uncertainty, linking strategic queue-control decisions with tactical OT scheduling. 
The framework combines three components: (i) a closed-loop control policy that determines the number of patients to schedule in each planning cycle; (ii) a model to account for overtime-related cancellation risks; and (iii) a mixed-integer linear programming formulation, informed by the previous components, to solve an OT Scheduling Problem. 
The resulting tactical schedules are therefore guided by the long-term evolution of speciality-specific queues and by cancellation-risk considerations. 
We evaluate the proposed framework through a case study at University Hospital Unicamp, a large Brazilian public referral hospital. 
The results show that integrating strategic and tactical decisions stabilises waiting lists, controls cancellation risks, and supports more transparent theatre-allocation decisions, while remaining computationally efficient for practical implementation.
\end{abstract} 



\begin{keyword}
Elective Surgery 
\sep Strategic and Tactical Planning 
\sep Queue Management 
\sep Operating Theatre Scheduling 
\sep Optimization under Uncertainty



\end{keyword}

\end{frontmatter}

\newcommand{\citar}{\textcolor{red}{\textbf{[CITAR]}}}

\section{Introduction\label{sec:intro}}

Long-term planning for elective surgeries has attracted increasing attention from both public~\citep[e.g.,][]{Calegari2025} and private healthcare~\citep[e.g.,][]{Aissaoui2020} organisations, owing to its implications for access to care, service quality, and resource utilisation~\citep{AlAmin2024Review,Rachuba2024}. 
Initiatives such as the United Nations' Sustainable Development Goal 3~\citep{undp_good_health_undated} and Brazil's National Programme for Reducing Queues for Elective Surgeries~\citep{secom_pnrf_2024} illustrate the relevance of this issue by promoting medium (tactical) and long-term (strategic) actions aimed at improving access to surgical care. 

These government policies also underline the need for planning approaches that translate broad healthcare objectives into feasible decisions. 
Indeed, well-defined strategic goals provide guidance for shorter-term decisions, supporting the alignment between resource allocation, service targets, and broader healthcare priorities~\citep{Rathnayake2024, Bort2026}.

Waiting-list management is a central component of such strategic surgery planning, as it defines the point at which patients enter the surgical pathway~\citep{Daoud2026}. 
Specifically, the time patients spend on waiting lists directly affects access to care and the perceived quality of healthcare services~\citep{Freeman2018, Yu2026}. 
This issue is particularly relevant in public healthcare systems, such as Brazil's Unified Health System (\textit{Sistema Único de Saúde} -- SUS), where waiting-time requirements are linked to legal obligations and closely monitored by regulatory agencies~\citep{muricy2025filas}. 
Thus, waiting-list management is not only a matter of operational performance but also a critical aspect of healthcare equity~\citep{Melo2025,muricy2025filas}.

Although reducing waiting lists is often associated with maximising the number of patients treated~\citep[e.g.,][]{Barrera2018, Patro2022, Daoud2026}, focusing solely on throughput may lead to short-sighted planning decisions when queue dynamics are ignored. 
In real-world elective surgery systems, patient demand is uncertain, varies across surgical specialities, and evolves over time~\citep{Wang2014, Rahimi2020Review, Rachuba2024}. 
Ignoring such uncertainty may result in schedules that are misaligned with actual demand, leading to excessive waiting times or underutilisation of resources. 
Consequently, effective queue management requires more than allocating available operating theatre (OT) capacity to current demand; it must also account for stochastic patient arrivals over time and ensure that a hospital's service capacity remains aligned with the evolving state of the queues~\citep{Rathnayake2024}.

A second source of uncertainty arises from surgery duration~\citep{Wang2021, Calegari2025}. 
When defining how many surgeries should be planned within a block of OT time, assuming deterministic durations can produce plans that are difficult to implement in practice~\citep{Bansal2024}. 
Scheduling an excessive number of surgeries in a block may increase the risk of overtime, which in turn may lead to cancellations, schedule disruptions, and longer waiting times for patients who remain in the queue~\citep{AlAmin2024Review}. 
Conversely, overly conservative plans may leave scarce theatre time underused~\citep{Choudhary2024}. 
This balance has important implications for a strategic surgery planning, as it directly affects the number of planned surgeries that can be realistically performed, which in turn influences the long-term evolution of waiting lists.
Therefore, long-term elective planning should jointly consider uncertainty in patient demand and surgery duration, so that queue targets are achievable with a high level of confidence while the risk of overtime-related cancellations remains controlled.

Achieving control over the long-term evolution of waiting lists requires integrating strategic and tactical planning levels~\citep{Melo2025}. 
Specifically, strategic decisions define long-term targets for waiting-list management and determine how cancellation risks should be controlled, whereas tactical decisions translate these targets into OT allocations for surgical specialities. 
However, many studies on OT scheduling~\citep[e.g.,][]{Makboul2021Robust, MolinaPariente2026, Guo2026} focus primarily on the tactical allocation problem, often treating demand volumes or speciality requirements as deterministic inputs. 
Although such approaches can generate efficient schedules for a given planning period, they do not explicitly control the long-term evolution of speciality-specific waiting lists. 
This creates a gap between queue management objectives and the tactical decisions that determine how surgical resources are actually used. 
Accordingly, there is a need for planning approaches that connect strategic queue-control mechanisms with tactical OT allocation decisions.

One possible way to address this gap is to formulate a fully integrated optimisation model that simultaneously captures queue evolution, uncertainty in demand and surgery duration, cancellation risk, and OT allocation. 
By optimising all these aspects together, such a model could theoretically produce plans that are globally optimal with respect to the long-term defined objectives.
However, a monolithic approach may become computationally challenging and difficult to adapt to real-world hospital settings. 
This motivates the development of a modular system that integrates strategic and tactical decisions while preserving computational tractability and managerial interpretability.

In this paper, we propose a framework for long-term elective surgery planning under uncertainty. 
The framework consists of three interconnected modules:
(i) a waiting-list control module based on a modified $(R,Q)$ inventory policy~\citep{Axsater2015Inv} that determines the target number of patients to be scheduled in each planning cycle, thereby supporting long-term queue stability \citep{Shortle2018};
(ii) a cancellation-risk management module that uses a Newsvendor-based model~\citep{Arrow:1951,Petruzzi2011} to help define the number of surgeries to be planned within each OT block while controlling the risk of overtime-related cancellations; and
(iii) a tactical OT allocation module that uses a Mixed-Integer Linear Programming (MILP) model to assign surgical specialities to theatre blocks based on the targets generated by the previous modules. 

By linking these three components, the proposed framework is, to the best of our knowledge, the first to align long-term waiting-list management with medium-term resource allocation decisions, while explicitly accounting for key sources of uncertainty in elective surgery planning.

The proposed framework is evaluated through a case study at University Hospital Unicamp (\hcunicamp), one of Brazil's largest university hospitals and a major referral centre in the public healthcare system, serving 86 municipalities in the metropolitan region of Campinas \citep{ToledoTeixeira2023, Silva2025}.
The results from the experiments show how the framework can support waiting-list stabilisation, reduce the risk of overtime-related cancellations, and provide more transparent theatre-allocation decisions. 
Therefore, the paper contributes to the literature on elective surgery planning by offering an integrated yet modular approach that has practical relevance for real-world hospitals seeking to improve access to care while managing scarce resources effectively.

The remainder of this paper is organised as follows. 
Section~\ref{sec:lit_review} reviews the literature on the integration of strategic and tactical decisions in elective surgery planning, identifies the main research gaps, and positions the contributions of this paper. 
Section~\ref{sec:problem} describes the case study that motivates the proposed work. 
Section~\ref{sec:proposal} presents the proposed framework, detailing each of its components and their theoretical properties. 
Section~\ref{sec:experiments} reports the computational experiments conducted in the case-study setting. 
Finally, Section~\ref{sec:conclusion} concludes the paper by discussing the main implications of the proposed framework and outlining directions for future research.
%
\section{Related Works\label{sec:lit_review}}

Surgery scheduling problems have traditionally been classified into three planning levels with different sources of uncertainty: strategic, tactical, and operational~\citep{Rahimi2020Review, Cardoen2010, wang2021operating, Melo2025}. 
At the strategic level, long-term decisions define the desired split of the hospital's capacity for different surgical specialities and specify patient volumes to be admitted over the planning horizon~\citep{Burdett2023} --- a process commonly referred to as Case Mix Planning (CMP). 
Different objectives have been introduced at this level, such as maximizing profit~\citep{Fugener2015, Ma2013Mult} or minimizing long-term costs~\citep{Hof2015Casemix, Mcrae2020}.  The tactical level defines a cyclic schedule (often weekly) commonly known as Master Surgery Schedule (MSS)~\citep{Bovim2020,Siqueira2018INTO,Carneiro2025}. 
The MSS step allocates time blocks in OTs to surgical teams and is reviewed in a medium-term horizon (e.g., monthly or quarterly). 
Finally, the operational level focuses on the daily operation and derives a Surgical Case Schedule (SCS), which generally assigns individual patients to time blocks previously allocated to specific surgical teams at the tactical level~\citep{Vieira2025, Harris2022}.

Despite the large body of the literature in each of these decision levels, there are significant gaps to bridge~\citep{Rahimi2020Review, Melo2025}.
First, inter-level integration remains limited in scope, often focusing on short-term alignment between surgeries and post-surgical bed occupation, and generally relying on deterministic approaches~\citep{Melo2025,wang2021operating}. 
Second, strategies with a focus on guaranteeing waiting time requirements for patients still remain to be proposed~\citep{Siqueira2018INTO}. 
These gaps are possibly related to the complex propagation of uncertainty over different planning levels, which demands hybrid approaches that combine mathematical programming, stochastic modelling and queuing theory. 

The inherent interconnection between the strategic and tactical levels has motivated unified frameworks that link CMP and MSS decisions~\citep{Melo2025}.
These often adopt sequential or joint optimisation approaches in which case mix decisions act as input to surgery scheduling models~\citep{Ma2013Mult, Fugener2015}.
Modelling approaches typically consist of Mixed-Integer Linear Programming (MILP) formulations that focus on maximising hospital profit subject to OT and/or bed capacity constraints~\citep[e.g.,][]{Fugener2015}.
While these frameworks acknowledge uncertainty, this is often done through post-operative resource requirements, such as length of stay or ward occupancy~\citep{Moosavi2020, Schneider2020}, which are particularly relevant when downstream capacity is a binding concern.
In contexts where downstream units are sufficiently available, however, the integration of CMP and MSS decisions still requires attention to uncertainty arising within and before the surgical session itself.
This includes stochastic surgery durations, which directly affect OT utilisation, cancellations, idle time, and overtime risk~\citep{Choudhary2024}, as well as the explicit modelling of dynamic waiting lists across surgical specialities, which constitute the starting point of the surgical planning process~\citep{Daoud2026}.

The interconnection perspective can be broadened by incorporating additional uncertainty sources and dynamic elements.
For instance,~\citet{Liu2019} model elective admissions through a Markov Decision Process (MDP), highlighting the role of waiting-list size in guiding strategic decisions.
Despite its conceptual value, the framework disregards essential practical constraints, such as surgeon availability, block scheduling and case-mix heterogeneity, thereby limiting practical applicability.
Similarly,~\citet{Mcrae2020} demonstrate that explicitly modelling uncertainty and selecting appropriate aggregation levels can significantly improve CMP outcomes and enable joint CMP-MSS optimisation.
Nevertheless, waiting-list dynamics remain outside of the modelling scope, implying no full integration of strategic and tactical decisions. 
Furthermore, their reliance on Sample Average Approximation (SAA)~\citep{Hu2010} may lead to representations that may understate distributional variability. 
These studies underscore the value of integrating CMP and MSS, yet they also reveal persistent shortcomings: uncertainty is represented only selectively across planning levels, waiting-list requirements are rarely modelled explicitly, and key operational constraints required for real-world implementation remain overlooked.

Uncertainty in surgery duration is another important issue that, when disregarded, may result in unpredictable idle or overtime in OTs~\citep{shehadeh2022stochastic}. 
Whilst overtime is costly and can lead to  cancellations that compromise the quality of care, idle time reflects an inadequate use of OT capacity~\citep{shehadeh2022stochastic, Harris2022, Kayvanfar2025}. 
With the aim of maximising hospital revenue while preventing excessive overtime,~\citet{Barz2015} propose a tactical patient admission and scheduling problem that derives a CMP under multiple resource constraints and stochastic clinical progression. 
The approach utilises a newsvendor heuristic to reserve capacity for emergency arrivals, and an MDP model to select the patient cohort to be admitted whilst considering probabilistic transitions between clinical states. 
The model, however, excludes the possibility of cancellations or postponements driven by overtime, as it assumes that the hospital always disposes of the resources to treat each admitted patient. 
Moreover, by treating resource consumption as deterministic, the framework effectively ignores uncertainty in surgery duration --- one of the main drivers of operational instability. 
As a result, the model underestimates the practical interplay between stochastic surgery durations, cancellations and overtime, limiting its applicability to integrated CMP-MSS decision-making.

Furthermore,~\citet{Zhang2020} study a tactical surgery planning with the explicit aim of mitigating overtime risk when allocating elective patients to OTs over an upcoming planning horizon. It is worth noting, however, that long-term waiting time management is not  considered. Their model accounts for stochastic surgery durations and adopts a risk-averse objective that jointly captures both the probability and the expected magnitude of overtime, extending earlier approaches that consider only one of these dimensions~\citep{Shylo2013, Rath2017}.
Although they derive the durations from real-world data, eventually they approximate it through a set of discrete scenarios, limiting the full distributional representation.
A similar treatment appears in~\citet{Shehadeh2026}, which introduces a Distributionally Robust Optimisation (DRO) approach to address uncertainty in surgery durations.
While their approach minimises worst-case expected cost over an ambiguity set defined by known moments and support, it likewise relies on scenario-based representations.
Accurately capturing surgery-duration variability requires a large number of scenarios, substantially increasing computational burden and potentially undermining the practical advantages of DRO and related methods.

These limitations regarding the integration of overtime management and its associated cancellation risks into surgical planning frameworks highlight important research gaps.
There remains a need for models that explicitly incorporate cancellation dynamics induced by overtime and that provide a more thorough representation of surgery-duration uncertainty beyond scenario-based approximations. 
A key challenge addressed in this paper is the development of a modular approach that captures full distributional characteristics to build a block planning model. 
The output of this model is then used as input to an integrated CMP-MSS approach, thereby linking the two decision layers sequentially rather than embedding them in a single monolithic formulation.

An additional and often underexplored dimension in integrated surgical planning concerns the incorporation of practical constraints inherent to real-world healthcare settings. 
Certain surgical activities, such as organ transplantation, impose rigid scheduling requirements that are difficult to reconcile with standard optimisation assumptions~\citep{Silva2025}. 
Furthermore, public hospitals in developing countries (e.g., Brazil) operate under severe resource constraints, high and volatile demand, and heterogeneous case mixes, all of which exacerbate the complexity of surgical planning~\citep{Carneiro2025, Siqueira2018INTO, Silva2025}. 

These contextual factors challenge the scalability and transferability of many optimisation-based approaches, which often presume stable resources, predictable demand and homogeneous patient populations. 
Indeed, as noted by~\citet{Melo2025}, much of the existing literature is calibrated to relatively small hospitals in developed regions, limiting its relevance for large, resource-constrained systems. 
Ignoring such operational realities risks producing models that are theoretically stable but difficult to implement in practice, underscoring the need for scheduling frameworks that explicitly accommodate institutional, organisational and contextual constraints.

This paper addresses the identified gaps in the elective surgical planning literature by proposing a novel integrated framework that links strategic waiting-list control and tactical OT allocation while explicitly accounting for dynamic surgical demand and stochastic surgery durations. 


By jointly addressing relevant sources of uncertainty within a unified framework, this work introduces an innovative and integrated surgical planning approach that advances the literature by bridging multiple relevant gaps. 
The approach also bridges the gap between theory and practice by bringing forward a practically grounded and empirically validated framework for long-term management of surgery queues within complex healthcare settings.
%
\section{Description of the Case Study\label{sec:problem}}

\hcunicamp~is a university hospital located in Campinas, Brazil~\citep{hc_unicamp_historia}. 
It is a high-complexity public hospital, serving approximately $6.5$ million inhabitants from Campinas' metropolitan region, that handles around $500{,}000$ patient cases per year. 
Of these, roughly $10{,}000$ correspond to elective surgical procedures, with the remainder involving outpatient consultations and emergency care~\citep{hc_unicamp_dados}. 
In addition to providing healthcare services, this hospital operates as a teaching and research hospital, training healthcare professionals and conducting medical and interdisciplinary research.

Elective surgical patients follow a complex multi-stage care pathway (see Figure~\ref{fig:patient_flow}), encompassing pre-, intra- and post-operative phases.
This process begins with waiting-list management, initiated at local medical consultations and coordinated through the regional bed regulation centre~\citep{central_reguladora}, a government agency responsible for regulating hospital access across Campinas Metropolitan Region.
Patient entry into \hcunicamp~typically occurs via scheduled outpatient appointments, although some emergency cases subsequently become elective and are incorporated into the same planning process.
Once referred, patients are assigned to waiting lists according to their medical subspeciality.
Decisions at this stage directly determine the CMP (i.e., strategic layer), influencing waiting times, speciality-level demand balance and the long-term utilisation of OT capacity~\citep{Siqueira2018INTO}.
%
\begin{figure}[htb]
    \centering
    \includegraphics[width=0.99\textwidth]{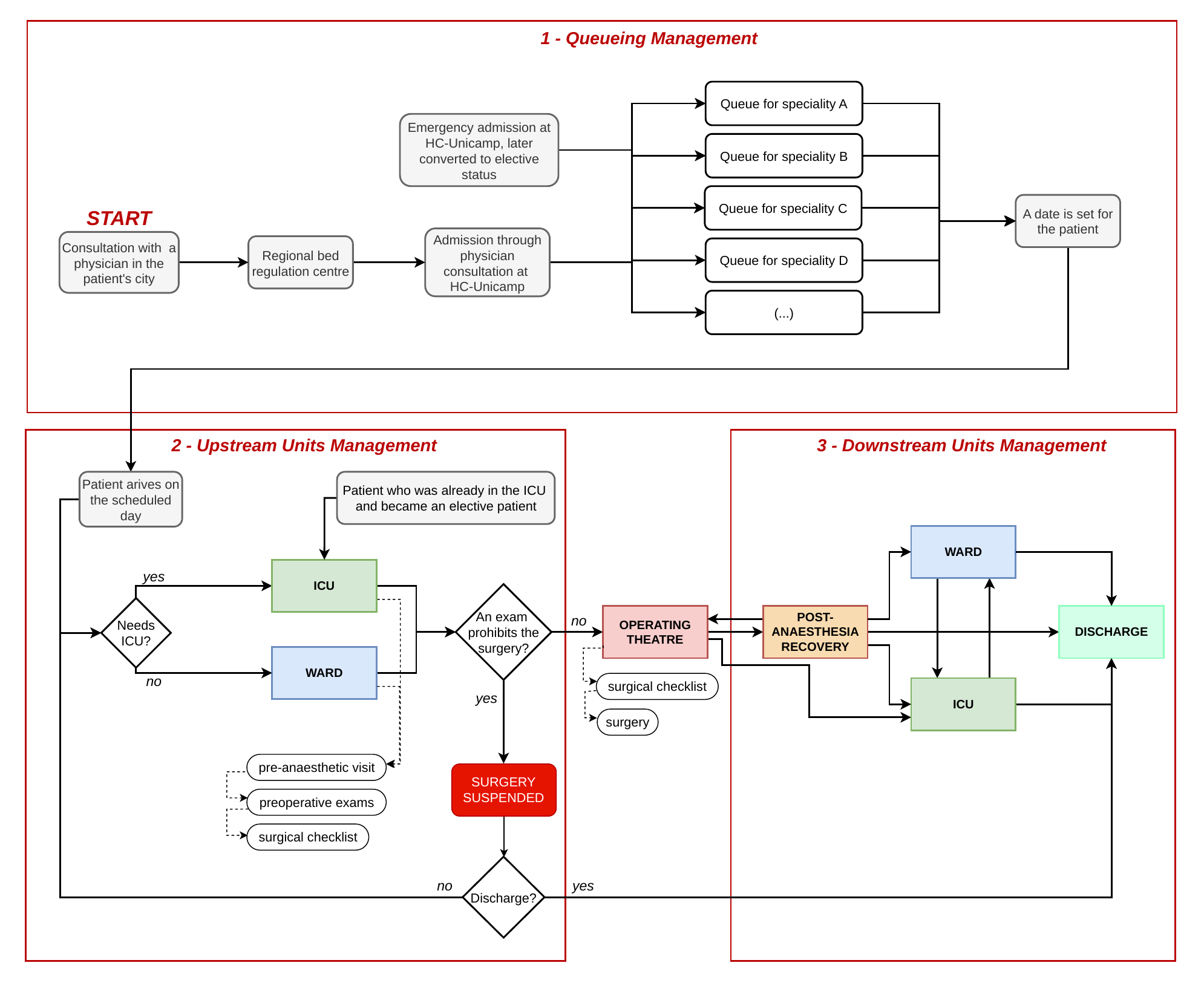}
    \caption{Patient flow at \hcunicamp.\label{fig:patient_flow}}
\end{figure}

On the day of surgery, patients enter the upstream unit management phase, where clinical reassessment is performed.
Based on clinical requirements, Intensive Care Unit (ICU) support may be required pre- or postoperatively, necessitating prior bed reservation; otherwise, postoperative ward bed availability must be ensured.
Patients then undergo pre-anaesthetic evaluation, diagnostic tests and surgical safety checks.
If contraindications arise, the procedure is suspended and the patient is either discharged or rescheduled, according to clinical severity.
If cleared, the patient proceeds to the OT, where the surgical procedure and subsequent theatre cleaning are performed.
In the present study, ward and ICU capacity are sufficient and therefore do not constitute binding constraints, allowing the analysis to focus on waiting-list management, uncertainty in surgery duration and OT allocation.

Once in the OT, the patient is managed by a dedicated surgical team and anaesthetist. 
This stage includes final safety checks, the surgical procedure itself and post-operative cleaning. 
Here, uncertainty in surgery duration plays a central role, as it directly affects idle time, overtime, and the risk of cancellations. 
Indeed, it is well known that surgery duration variability has a significant impact on OT utilisation and overall system throughput \citep[e.g.,][]{Pandit2011}, therefore meriting explicit consideration in the proposed framework.

Following surgery, patients are transferred to the Post-Anaesthesia Recovery Unit (PARU), with direct ICU transfer occurring only in exceptional cases. 
After stabilisation, patients are admitted to a ward or ICU bed and subsequently discharged. 

\hcunicamp~has $15$ OTs dedicated to elective surgeries: $13$ located in the Elective Surgery Centre (ESC) and $2$ in the Ambulatory Surgery Centre (ASC), which operates exclusively during morning shifts. These theatres are heterogeneous with respect to infrastructure and equipment, leading to compatibility restrictions across medical subspecialities.

The planning horizon comprises approximately $20$ working days per month, each divided into two $6$-hour shifts (morning and afternoon). 
Henceforth, we refer to a theatre shift as a \textit{block}.
Hospital policy assigns each block to a single subspeciality, while allowing the same subspeciality to occupy multiple blocks within the same theatre, as long as the compatibility conditions are satisfied.
Sequential allocation is therefore permitted, enabling a subspeciality assigned to a morning block to also occupy the corresponding afternoon block. Sequential blocks are preferred by hospital managers as they improve logistical operations.

Each month, OTs must be allocated across $40$ medical subspecialities.
Relevant subsets include: (i) those that mandatorily require two consecutive blocks in the same theatre due to the extended duration of their procedures; (ii) priority subspecialties defined by hospital management, which take precedence in allocation decisions in accordance with institutional rules; and (iii) transplant subspecialities, which impose additional operational constraints.

Block allocation feasibility depends mainly on the availability of surgical teams and anaesthetists.
A subspeciality can only be assigned to a block if an appropriate team is available on the corresponding day and shift.
Moreover, the number of simultaneous (i.e., in the same day-shift) allocations requiring anaesthesia cannot exceed the number of anaesthetists available in each period.

\begin{figure}[htb]
    \centering
    \includegraphics[width=0.9\textwidth]{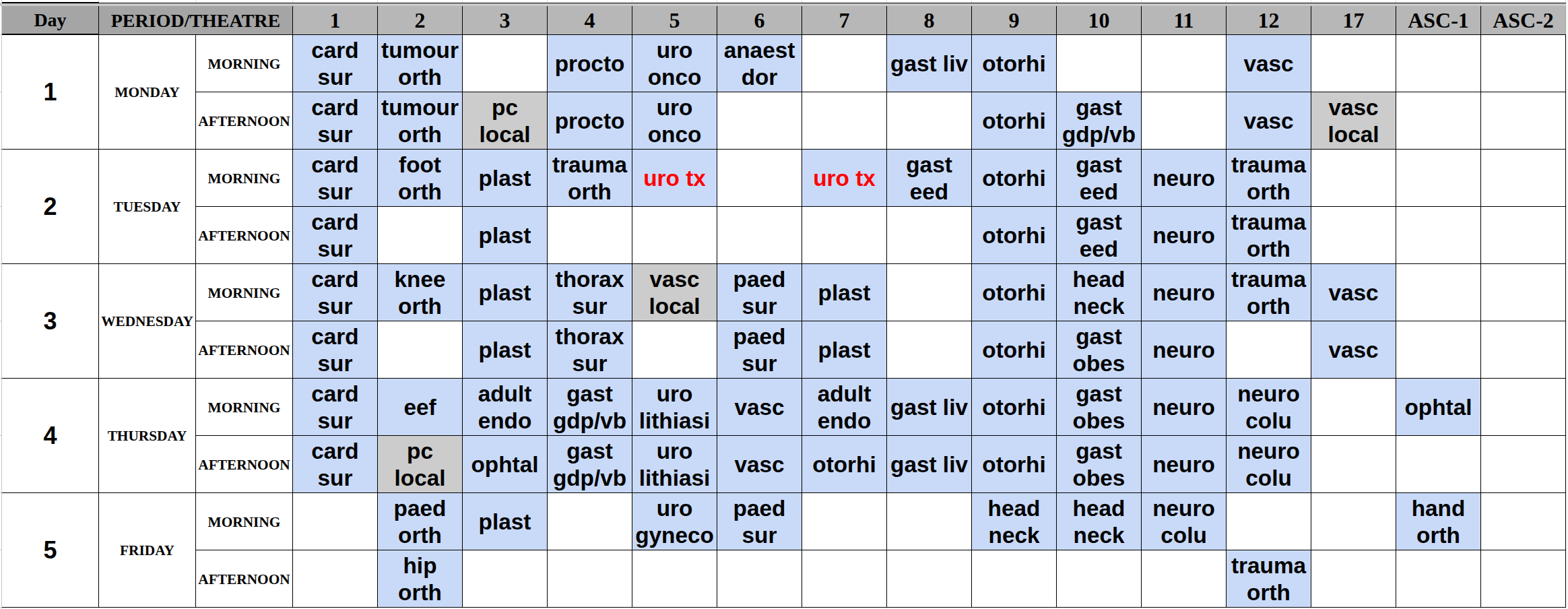}
    \caption{Excerpt from a week's OT schedule.\label{fig:ex_escala}}
\end{figure}

Figure~\ref{fig:ex_escala} presents an excerpt from a monthly OT schedule. Rows correspond to day and shift and columns to theatres; allocated blocks are colour-coded (blue for procedures requiring anaesthesia and grey otherwise). 
Unallocated (blank) slots reflect the complexity of the constraint set. 
Allocations shown in red correspond to transplant surgeries, which require the simultaneous allocation of two theatres within the same day and shift.

Historically, schedules such as those shown in Figure~\ref{fig:ex_escala} have been manually constructed using only electronic spreadsheets, often requiring more than two full working days. This manual process is time-consuming and does not guarantee optimality with respect to institutional policies. Automating the scheduling process therefore reduces administrative burden and enables the generation of optimal solutions under complex constraints.

In addition to the impact of automating surgery scheduling in a reference hospital in Brazil, this paper provides an approach that is firmly grounded in reality and is therefore adaptable to other general hospitals. By jointly managing waiting-list dynamics, surgery duration uncertainty and OT allocation, our proposed approach provides a coherent solution capable of satisfying waiting-time requirements while mitigating idle time, overtime and cancellation risks.

%
\section{Proposed Framework\label{sec:proposal}}

Rather than formulating a single monolithic model, we propose a framework that decomposes the long-term elective surgery problem into three inter-related modules. 
The first two modules address strategic decisions related to queue and overtime management, and their outputs are then used as inputs to the tactical OT scheduling model. 
Figure~\ref{fig:framework} summarises the proposed framework.

\begin{figure}[htb]
    \centering
    \includegraphics[width=0.99\textwidth]{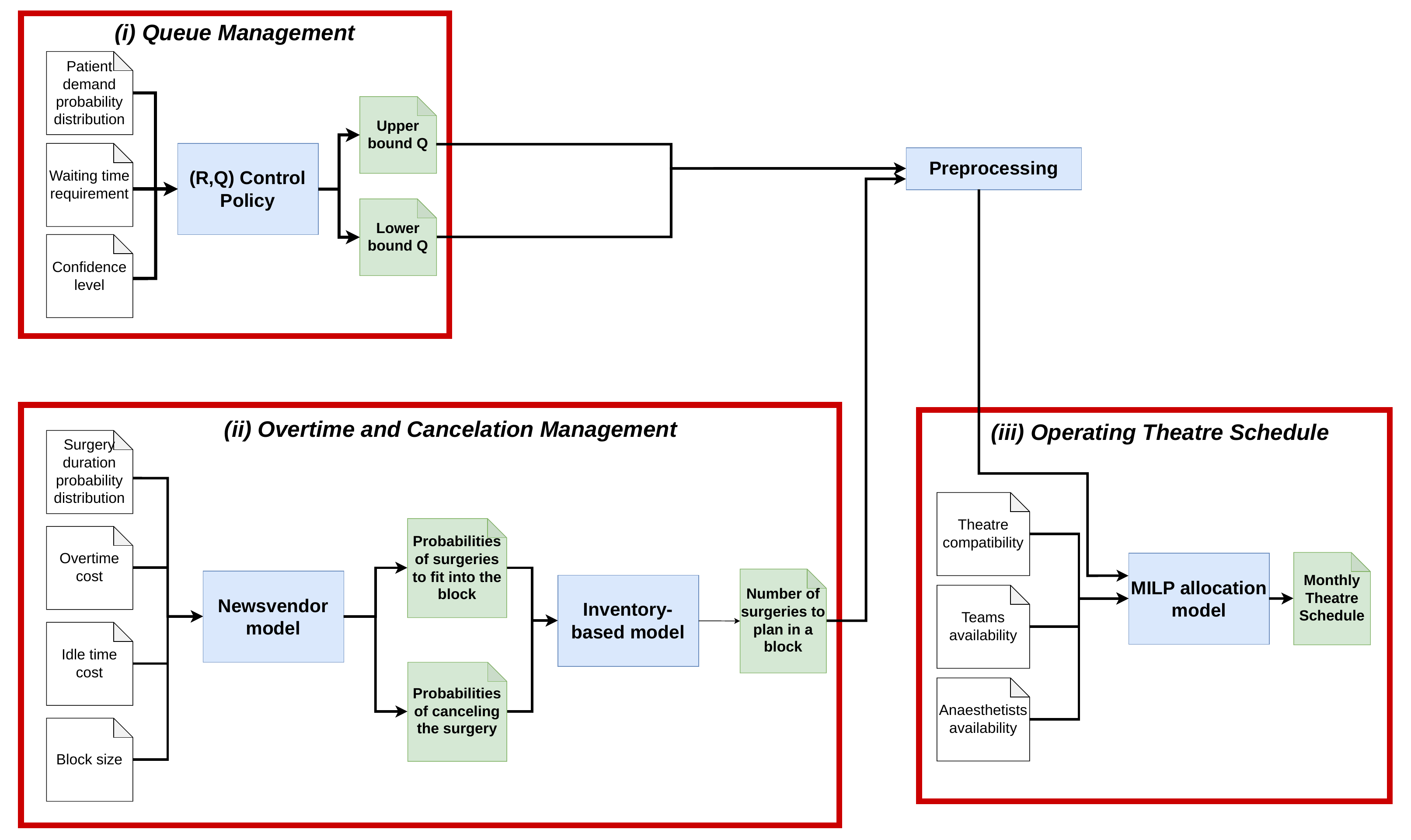}
    \caption{Proposed framework.\label{fig:framework}}
\end{figure}

The \rqmodule~module establishes targets for the number of surgeries to be performed at each planning period (e.g., monthly). 
Specifically, a modified $(R,Q)$ policy defines a lower and an upper bound on the number of patients to be admitted for surgery during the planning period for each medical subspeciality, while ensuring that system-wide waiting-time requirements are satisfied in spite of the uncertainty in patient arrivals.

The second module balances idle time, overtime and its consequential risk of surgery cancellation. 
Considering the uncertainty in surgery duration, a Newsvendor-based model guides dynamic decisions on whether to perform or postpone the next surgery, given the time remaining in the session. This model produces a probability distribution of the number of surgeries carried out for any number of planned procedures. These distributions are then evaluated  in an inventory-based formulation that determines the optimal number of surgeries to book in a block.


The outputs of the first two modules are then combined to provide strategic inputs to the third module, which solves an OT Scheduling Problem (OTSP), i.e., a tactical scheduling model that allocates OT blocks to surgical subspecialities.
Specifically, given the probability distribution of the optimal number of surgeries per session (module 2), and the bounds on the number of surgeries to perform in each period (module 1), one can derive lower and upper limits on the number of blocks to be allocated to each surgical subspeciality. 
Hence, theatre allocation decisions are informed not only by resource availability and OT compatibility, but also include constraints to guarantee that waiting-time requirements can be met despite the uncertainties in the demand and duration of surgeries. 
Additionally, the scheduling model remains deterministic, as the uncertainty is captured in the parameters derived from the first two modules.

Overall, the proposed framework provides a consistent approach to elective surgery planning across multiple decision levels.
By explicitly incorporating uncertainty into the strategic planning stages, it supports more realistic scheduling decisions that are guaranteed to align to waiting-time requirements.
The following subsections present each module in detail. 

%
\subsection{Module 1: Long-Term Waiting List Management\label{sec:waitinglist}}

As discussed in Section~\ref{sec:lit_review}, CMP constitutes the strategic decision layer responsible for determining the volumes of elective patients to be admitted over a given planning horizon.
In contrast to much of the existing literature, which primarily selects these volumes to optimise financial or capacity-related objectives, this work adopts a stability-oriented perspective.
We aim to enable decision-makers to impose maximum acceptable waiting times (e.g., six months) and to guarantee, in the long run, that such targets are met without allowing waiting lists to grow unboundedly. 
This is especially important in public healthcare systems, such as Brazil's SUS, where there is a legal duty to provide care to all patients who need it~\citep{muricy2025filas}.
Therefore, ensuring equitable access to care and adherence to waiting time standards is paramount.

In line with this objective, this section introduces the theoretical foundations of the modified $(R,Q)$ control policy proposed to manage elective surgery waiting lists under stochastic demand.
We first characterise the dynamics of the underlying stochastic process and analyse its long-run behaviour. 
Subsequently, we derive analytical upper and lower bounds on the batch size $Q$ that ensures compliance with prescribed waiting-time requirements, thereby establishing the strategic input for the integrated planning framework.
Table~\ref{tab:symb_rq} summarises the notation used in this section.

\begin{table}[htb]
    \centering
    \caption{Symbols used in the long-term waiting list management model.\label{tab:symb_rq}}
    \scriptsize
    \begin{tabular}{cl}
    \toprule
    \textbf{Index} & \textbf{Description} \\
    \midrule
    $\stocprocperiod$ & Planning period index \\
    $\stocprocremainstate$ & State index of the remainder process \\
    $i,j$ & Generic state indices \\
    \midrule

    \textbf{Set} & \textbf{Description} \\
    \midrule
    $\statespace$ & State space of process $\stocprocqueue{\stocprocperiod}$ \\
    $\statespace_{\stocprocbatches{}}$ & State space of process $\stocprocbatches{\stocprocperiod}$ \\
    $\statespace_{\stocprocremain{}}$ & State space of process $\stocprocremain{\stocprocperiod}$ \\
    $\supportrq$ & Support of the demand random variable \\
    \midrule

    \textbf{Parameter} & \textbf{Description} \\
    \midrule
    $\rthreshold$ & Reorder threshold of the modified $(R,Q)$ policy \\
    $\batchsizeq$ & Batch size under the modified $(R,Q)$ policy \\
    $\waittimereq$ & Waiting-time target \\
    $\conflevelrq$ & Probability threshold for the waiting-time requirement \\
    $\upq$ & Upper bound on feasible batch sizes \\
    $\lowq$ & Lower bound on feasible batch sizes \\
    $\maxnumberbatch$ & Maximum number of batch arrivals in one period \\
    \midrule

    \textbf{Random Variable / Stochastic Process} & \textbf{Description} \\
    \midrule
    $\stocprocqueue{\stocprocperiod}$ & Queue-length process \\
    $\rvdemand$ & Demand between two consecutive planning periods \\
    $\drawnrvdemand$ & Realisation of $\rvdemand$ \\
    $\stocprocbatches{\stocprocperiod}$ & Batch-level queue process \\
    $\stocprocremain{\stocprocperiod}$ & Remainder process associated with $\stocprocqueue{\stocprocperiod}$ \\
    $\roundbatches$ & Number of batch arrivals in one period \\
    $\omega$ & Waiting time of an arriving patient \\
    \midrule

    \textbf{Function} & \textbf{Description} \\
    \midrule
    $f_{\rvdemand}(\cdot)$ & Probability mass function of $\rvdemand$ \\
    $F_{\rvdemand}(\cdot)$ & Cumulative distribution function of $\rvdemand$ \\
    $\roundbatches(\stocprocremainstate)$ & Number of batch arrivals given remainder state $\stocprocremainstate$ \\
    $\steadystate{\stocprocremain{}}(\cdot)$ & Steady-state distribution of $\stocprocremain{\stocprocperiod}$ \\
    $\steadystate{\stocprocbatches{}}(\cdot)$ & Steady-state distribution of $\stocprocbatches{\stocprocperiod}$ \\
    $\transmatrix{\stocprocremain{}}$ & Transition matrix of process $\stocprocremain{\stocprocperiod}$ \\
    $\transmatrix{}^{\stocprocbatches{}}$ & Transition matrix of process $\stocprocbatches{\stocprocperiod}$ \\
    $\transmatrixval{\stocprocremainstate\stocprocremainstate'}{\stocprocremain{}}(k)$ & Conditional transition probability of $\stocprocremain{\stocprocperiod}$ given demand $k$ \\
    $\transmatrixval{ij}{}$ & Generic transition probability entry \\
    \bottomrule
    \end{tabular}
\end{table}

\subsubsection{$(R,Q)$ surgery scheduling policy: long-term distributions\label{sec:rqtheory}}

Let $\stocprocqueue{\stocprocperiod}, \, \stocprocperiod \ge 0$, be a stochastic process representing an elective surgery queue with stationary and independent increments over discrete planning periods, with $\stocprocqueue{\stocprocperiod}$ representing the number of patients in the system at period $\stocprocperiod$. 
Assume that the queue is controlled via a modified $(R,Q)$ policy \citep[e.g.,][]{Axsater2015Inv} whereby a fixed batch of $\batchsizeq \in \mathbb N$ surgeries is processed at time period $\stocprocperiod$ whenever $\stocprocqueue{\stocprocperiod} \ge \rthreshold$  at the start of the period, where $\mathbb N$ is the set of positive integers. 
In contrast to the classical $(R,Q)$ policy, which processes as many batches as necessary to bring the inventory below the reorder point $R$, our policy can only process a batch at a time and relies on long-term properties to ensure that the process $\stocprocqueue{\stocprocperiod}, \, \stocprocperiod \ge 0$ is stable and guarantees prescribed waiting time requirements. To facilitate the discussion, we will assume that $R=Q$ for the remainder of the session.

Let $\rvdemand$ be a non-negative integer random variable representing the overall demand between two consecutive planning periods, with probability distribution $f_{\rvdemand}: \Omega \to [0,1), \, \Omega \subset \mathbb Z_+$, and cumulative distribution $F_{\rvdemand}: \mathbb Z_+ \to [0,1]$ given by
\[
F_{\rvdemand}(d) = \sum_{l=0}^{\drawnrvdemand} f_{\rvdemand}(l),
\]
where $\mathbb Z_+$ denotes the set of non-negative integers. This implies that the dynamics of process $\stocprocqueue{\stocprocperiod}, \, \stocprocperiod \ge 0$ can be expressed as:
\begin{equation} \label{eq:dynamics}
\stocprocqueue{\stocprocperiod+1} = \stocprocqueue{\stocprocperiod} + \drawnrvdemand - \batchsizeq \1_{\{\stocprocqueue{\stocprocperiod} \ge \rthreshold\}},
\end{equation}
where $\1_{\{A\}}$ is the indicator function, which is equal to $1$ if $A$ holds and is $0$ otherwise, and $\drawnrvdemand$ is a value drawn from $f_{\rvdemand}$ that represents the surgical demand in period $\stocprocperiod$. The state space of process $\stocprocqueue{\stocprocperiod}, \, \stocprocperiod \ge 0$ is given by $\statespace = \{ 0, \, 1 , \, 2, \, \ldots \}$, where $\stocprocqueue{\stocprocperiod}=i$ indicates that there are $i$ patients waiting for elective surgery.

Letting 
\begin{equation} \label{eq:defYZ}
\stocprocbatches{\stocprocperiod} = \left \lfloor \dfrac{\stocprocqueue{t}}{Q} \right \rfloor, \qquad Z_t = \stocprocqueue{t} \, \% \,  \batchsizeq,
\end{equation}
where $\%$ represents the modulus operator, we can write:
\[
\stocprocqueue{\stocprocperiod} = Q \stocprocbatches{\stocprocperiod} + \stocprocremain{\stocprocperiod},
\]
where $\stocprocbatches{\stocprocperiod}, \, \stocprocperiod \ge 0$ is the integer number of batches required to bring the $X_t$ instantaneously below $\rthreshold$ and $\stocprocremain{\stocprocperiod} \in \{0, 1, \ldots, \batchsizeq-1\}$ is the number of patients that would remain after removing $\stocprocbatches{\stocprocperiod}$ batches of $\batchsizeq$ patients instantaneously.

Note that, from the definitions above, the state space of process $\stocprocbatches{\stocprocperiod}, \, \stocprocperiod \ge 0$ is $\statespace_{\stocprocbatches{}} = \mathbb Z_+$, while $\statespace_{\stocprocremain{}} = \{0, \, 1, \, \ldots, \, Q-1\}$ is the state space of process $\stocprocremain{\stocprocperiod}, \, \stocprocperiod \ge 0$. In the next result, we will derive the long-term distribution of process $\stocprocremain{\stocprocperiod}, \, \stocprocperiod \ge 0$.

\begin{prop} \label{prop:steadystate}
Let $\steadystate{\stocprocremain{}}: \statespace_{\stocprocremain{}} \to [0,1]$ be the steady state distribution of process $\stocprocremain{\stocprocperiod}, \, \stocprocperiod \ge 0$. Then, it holds that 
\begin{equation} \label{eq:steadyZ}
\steadystate{\stocprocremain{}}(\stocprocremainstate) = \dfrac{1}{\batchsizeq}, \forall \stocprocremainstate \in \statespace_{\stocprocremain{}}.
\end{equation}
\end{prop}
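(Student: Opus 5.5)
Because $R=Q$, the policy removes a batch of exactly $Q$ patients whenever one is removed at all. Subtracting a multiple of $Q$ leaves $X_t \,\%\, Q$ unchanged. So the control term in \eqref{eq:dynamics} is invisible to the remainder process, and
\[
Z_{t+1} = (X_t + d - Q\1_{\{X_t\ge R\}})\,\%\,Q = (Z_t + d)\,\%\,Q .
\]
The plan has three steps:
\begin{enumerate}
\item Use this identity to show that $Z_t$ is itself a time-homogeneous Markov chain on $\statespace_{\stocprocremain{}}$, driven only by the i.i.d.\ demands. It is a random walk on the cyclic group $\mathbb Z_Q$.
\item Show that its transition matrix $\transmatrix{\stocprocremain{}}$ is doubly stochastic.
\item Deduce that the uniform distribution is stationary, and argue that it is the steady state.
\end{enumerate}

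For step 2, write the transition probabilities explicitly as
\[
\transmatrixval{zz'}{\stocprocremain{}} = \sum_{k\in\Omega:\,(z+k)\%Q = z'} f_D(k).
\]
Each row sums to one, since every demand value sends $z$ to exactly one $z'$. Each column also sums to one. Fixing $z'$, the map $z\mapsto (z'-k)\,\%\,Q$ is a bijection for every $k$, so summing over $z$ counts each $k$ exactly once and gives $\sum_k f_D(k)=1$. Hence $\transmatrix{\stocprocremain{}}$ is doubly stochastic. Then $\pi(z)=1/Q$ satisfies $\pi \transmatrix{\stocprocremain{}}=\pi$, because each component of $\pi \transmatrix{\stocprocremain{}}$ is $\tfrac1Q$ times a column sum, which equals $\tfrac1Q$.

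Step 3, the main obstacle, is justifying that this stationary distribution is \emph{the} steady state, i.e.\ unique and the limiting distribution. That requires irreducibility, and aperiodicity if a limit is wanted rather than a Cesàro average. Both depend on the support $\Omega$ of $D$:
\begin{itemize}
\item The walk is irreducible on $\mathbb Z_Q$ iff the differences of elements of $\Omega$, together with $\Omega$ itself, generate $\mathbb Z_Q$. A sufficient condition is that $\Omega$ contains two consecutive integers, or more generally $\gcd(\{k-k' : k,k'\in\Omega\}\cup\{Q\})=1$.
\item If $0\in\Omega$, or the support contains two values differing by one, aperiodicity follows as well.
\end{itemize}
I would therefore state this mild assumption on $D$ explicitly, e.g.\ $f_D(k)>0$ for consecutive $k$, as is natural for Poisson-like demand. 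Under it, the finite chain is ergodic and the uniform distribution is its unique limiting distribution.

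Without the assumption, the conclusion still holds in a weaker sense: $1/Q$ is a valid stationary distribution, and it is the long-run average whenever the initial $Z_0$ is uniformly distributed. If the paper's intended statement is only "a stationary distribution", double stochasticity alone suffices.
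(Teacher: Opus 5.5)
Your proof is correct and follows the paper's route. The paper likewise writes $Z_{t+1} = (z+k)\,\%\,Q$, since the removed batch is a multiple of $Q$. It notes that for each demand value $k$ exactly one $z$ is sent to a given $z'$, so every column of $P_Z$ sums to one, and concludes from double stochasticity that the uniform distribution solves the balance equations.

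Your discussion of irreducibility and aperiodicity is a correct refinement the paper omits: the paper never argues uniqueness or convergence and simply assumes $Z_0$ uniform in the next subsection. As a small aside, the identity $Z_{t+1} = (Z_t+d)\,\%\,Q$ holds for any $R$, because the amount removed is always $Q$, so it is not really a consequence of $R=Q$.
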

\begin{proof}
The proof closely follows that of \citet[Proposition 5.1]{Axsater2015Inv}. Let $\transmatrix{\stocprocremain{}}$ be the transition matrix describing the dynamics of process $\stocprocremain{\stocprocperiod}, \, \stocprocperiod \ge 0$. Assume that $\stocprocremain{\stocprocperiod} = \stocprocremainstate, \, \stocprocremainstate \in \statespace_{\stocprocremain{}}$ at the start of period $\stocprocperiod \ge 0$, and let $d=k$ be the number of new patient arrivals in period $\stocprocperiod$. Hence, it follows that 
\[
\stocprocremain{\stocprocperiod+1} =  (\stocprocremainstate + k) \,\%\, \batchsizeq  = (\stocprocremainstate + k + \batchsizeq \stocprocbatches{\stocprocperiod}) \,\%\, \batchsizeq  = (\stocprocqueue{\stocprocperiod} + k) \,\%\, \batchsizeq,
\] 
where the last equality come from the definitions in Eq.~\eqref{eq:defYZ}. Therefore, given that the realisation of the random demand is $\rvdemand=k$, it is evident that $\transmatrixval{\stocprocremainstate\stocprocremainstate'}{\stocprocremain{}}(k) := P( \stocprocremain{\stocprocperiod+1} = \stocprocremainstate' | \stocprocremain{\stocprocperiod} = \stocprocremainstate, \rvdemand=k )$ equals one for exactly one value of $\stocprocremainstate$ and zero otherwise. Consequently, we can write:
\[
\sum_{\stocprocremainstate=0}^{\batchsizeq-1} \transmatrixval{\stocprocremainstate\stocprocremainstate'}{\stocprocremain{}} = \sum_{\stocprocremainstate=0}^{\batchsizeq-1} \mathbb{E}_{k} \left\{ \transmatrixval{\stocprocremainstate\stocprocremainstate'}{\stocprocremain{}}(k) \right\} = \mathbb{E}_k \left\{  \sum_{\stocprocremainstate=0}^{\batchsizeq-1} \transmatrixval{\stocprocremainstate\stocprocremainstate'}{\stocprocremain{}}(k) \right\} = \mathbb{E}_k \{ 1 \} = 1.
\]
The expression above  implies that the Markov chain describing the dynamics of process $Z_t, \, t \ge 0$ is doubly stochastic, and therefore has a uniform steady state distribution. Indeed, substituting \eqref{eq:steadyZ} in the steady state equations:
\[
\pi_Z(z) = \sum_{\stocprocremainstate'=0}^{\batchsizeq-1} \pi_Z(z')  \transmatrixval{\stocprocremainstate'\stocprocremainstate}{\stocprocremain{}},
\]
one can easily see that the steady state distribution satisfies Eq. \eqref{eq:steadyZ}.
\end{proof}

The result in Proposition \ref{prop:steadystate} is very important, as it shows that the steady state distribution of process $\stocprocremain{\stocprocperiod}, \, \stocprocperiod \ge 0$, is uniform regardless of the distribution of patient arrivals. Next, we will use this fact to derive the steady state distribution of process $\stocprocbatches{\stocprocperiod}, \, \stocprocperiod \ge 0$. 

\subsubsection{The dynamics of process $\stocprocbatches{\stocprocperiod}, \, \stocprocperiod \ge 0$\label{sec:batchprocess}}

Assume that process $\stocprocremain{\stocprocperiod}, \, \stocprocperiod \ge 0$ has reached equilibrium, which is also equivalent to supposing that $P(\stocprocremain{0} = \stocprocremainstate) = \steadystate{\stocprocremain{}}(\stocprocremainstate) = \dfrac{1}{\batchsizeq}$ \citep[e.g.,][]{Bremaud2020}. Recalling that one batch is processed whenever $X_t \ge R$, we can describe the dynamics of process $\stocprocbatches{\stocprocperiod}, \, \stocprocperiod \ge 0$ as:
\begin{equation} \label{eq:dynY}
\stocprocbatches{\stocprocperiod+1} = \begin{cases}
\stocprocbatches{\stocprocperiod} - 1 + \roundbatches, & \text{if} \; \stocprocbatches{\stocprocperiod} > 0, \\
\roundbatches; \; & \text{if} \; \stocprocbatches{\stocprocperiod} = 0,
\end{cases}
\end{equation}
where $\roundbatches$ is the rounded down number of batch arrivals at time $\stocprocperiod$. To fully characterize the random variable $\roundbatches$, it is necessary to define
\begin{equation} \label{eq:Az}
\roundbatches(\stocprocremainstate) = \left \lfloor \dfrac{\rvdemand + \stocprocremainstate}{\batchsizeq} \right \rfloor,
\end{equation}
which represents the rounded down number of batch arrivals at time $\stocprocperiod$, given that $\stocprocremain{\stocprocperiod} = \stocprocremainstate$. Then, in view of Proposition~\ref{prop:steadystate}, it follows that:
\begin{equation} \label{eq:batcharr}
P(\roundbatches = k ) = \dfrac{1}{\batchsizeq} \sum_{\stocprocremainstate=0}^{\batchsizeq-1} P(\roundbatches(\stocprocremainstate) = k ).    
\end{equation}

Therefore, from \eqref{eq:dynY} we have:
\begin{equation} \label{eq:ydynamicY}
\transmatrixval{ij}{\stocprocbatches{}} = P(\stocprocbatches{\stocprocperiod+1} = j | \stocprocbatches{\stocprocperiod} = i ) = \begin{cases}
    P\left( \roundbatches = j  \right), & \text{if} \; \stocprocbatches{\stocprocperiod} = 0, \\
    P\left( \roundbatches = j-i + 1 \right),  & \text{if} \; \stocprocbatches{\stocprocperiod} \ge 1.
\end{cases}
\end{equation}

Note that Eq. \eqref{eq:dynY}-\eqref{eq:ydynamicY} correspond to the dynamics in \citet[Section 6.1.2]{Shortle2018}. Similarly to \citet{Shortle2018}, the transition matrix of process $\stocprocbatches{\stocprocperiod}, \, \stocprocperiod \ge 0$ can be written as:
\begin{equation} \label{eq:transexplicit}
\transmatrix{}^{\stocprocbatches{}} = [\transmatrixval{ij}{}] = \left[
\begin{array}{cccccccc}
k_0 & k_1 & k_2 & \ldots & k_M & 0 & 0 & \ldots \\
k_0 & k_1 & k_2 & \ldots & k_M & 0 & 0 & \ldots \\
0 & k_0 & k_1 & k_2 & \ldots & k_M & 0 & \ldots \\
0 & 0 & k_0 & k_1 & k_2 & \ldots & k_M & \ldots \\
\vdots & \vdots & \vdots & \vdots & \vdots & \vdots & \vdots & 
\end{array}
\right],
\end{equation}
where $\maxnumberbatch$ is the maximum number of new batch arrivals, i.e., the maximum value that random variable $\roundbatches$ can assume.

The steady state probabilities of process $\stocprocbatches{\stocprocperiod}, \, \stocprocperiod \ge 0$ can then be recursively evaluated by solving:
\begin{equation} \label{eq:ssY}
\steadystate{\stocprocbatches{}}^T = \steadystate{\stocprocbatches{}}^T P,
\end{equation}
with the contour condition that $\steadystate{\stocprocbatches{}}(0) = 1 - \rho $ \citep[Eq. (6.12)]{Shortle2018}, 
%
%
where $\rho = \mathbb{E}(\roundbatches)$. Observe that, to guarantee stability, we must have $\mathbb E(\roundbatches) < 1$. Note also that the steady state probabilities up to a given state $j$ can be recursively calculated using the first $j$ equations in \eqref{eq:ssY}.

\subsubsection{Upper and lower bounds on the batch size $\batchsizeq$ under waiting time requirements} \label{sec:bounds}

In this session, we will evaluate the effect of the batch size $Q$ into the waiting time, therefore process $Y_t, \,t \ge 0,$ will be renamed as $Y_t^Q, \, t \ge 0,$ to make the dependence on the batch size $Q$ explicit. 
Let $\omega$ be the waiting time of a patient joining the queue, and assume that the healthcare system requires that:
\begin{equation} \label{eq:waitthreshold}
P( \omega > \tau ) < \conflevelrq,
\end{equation}
for some target waiting time $\waittimereq >0$ and probability threshold $\conflevelrq \in (0,1)$. Now, if a patient arrives and encounters $Y_t^Q > 0$ batches of patients in the queue, their turn will only arrive once these $Y_t^Q$ batches are processed, which will take $Y_t^Q$ time periods. This means that the patient will have to wait $\waittimereq = Y_t^Q$ periods for their turn. Therefore, Eq. \eqref{eq:waitthreshold} can be equivalently stated as:
\begin{equation} \label{eq:waitthresholdy}
\sum_{i = 0}^\waittimereq \pi_{Y^Q} > 1 - \conflevelrq,
\end{equation}
where $\pi_{Y^Q}$ is the steady state distribution of proces $Y_t^Q, \, t \ge 0,$ and satisfies Eq. \eqref{eq:ssY}. Therefore, to limit the waiting time according to the prescribed requirements, we can establish a lower bound ($\lowq$) on the batch size $\batchsizeq > 0$ as follows:
\begin{equation} \label{eq:under}
\lowq = \min \{ \batchsizeq \in \mathbb N : \sum_{i = 0}^\waittimereq \pi_{Y^Q}(i) > 1- \conflevelrq \}.
\end{equation}

When $Y_t^Q = 0$, on the other hand, the next batch will only be processed when the system leaves this state, therefore by applying the strong Markov property \citep{Bremaud2020}, we have:
\[
\omega = \min \{\delta > 0 : Y_{\delta}^Q > 0 | Y_{0}^Q = 0 \}.
\]
Considering the transition matrix in Eq.~\eqref{eq:ydynamicY}-\eqref{eq:transexplicit}, it follows that $\omega$ will be a geometrically distributed variable and:
\begin{equation} \label{eq:upperbound}
P( \omega > \waittimereq | Y_{0}^Q = 0 )  = {k_0(Q)}^\tau,.
\end{equation}
where $k_0(Q)$ is the value of $k_0$ in Eq. \eqref{eq:transexplicit} for a batch size of $Q$. 
One can intuitively see that the value of $k_0(Q)$ will increase as $\batchsizeq$ increases, since in this case we will need more patient arrivals until we are able to process a full batch of patients. Hence, from Eq. \eqref{eq:upperbound}, we can set up an upper bound $\upq$ for the batch size such that:
\begin{equation} \label{eq:overbound}
\upq = \max \{ \batchsizeq \in \mathbb N : k_0(Q)^{\waittimereq} < \conflevelrq \}.
\end{equation}

In the example below, we show how to derive general bounds $\lowq$ and $\upq$ for a given demand distribution and waiting time requirements.

\begin{ex}[Deriving lower and upper bounds for $\batchsizeq$]
Consider an elective surgery queue with monthly demand $\rvdemand$ that satisfies the distribution in Table \ref{tab:demprob}.
\begin{table}[h!]
\centering
\caption{Monthly demand distribution for surgeries} \label{tab:demprob}
\scriptsize{
\begin{tabular}{lccccccccccc} 
\toprule
\textbf{Demand ($\drawnrvdemand$)} & 0 & 1 & 2 & 3 & 4 & 5 & 6 & 7 & 8 & 9 & 10 \\ 
\midrule
$P(\rvdemand=\drawnrvdemand)$ & 0.0308 & 0.0393 & 0.1179 & 0.2096 & 0.2445 & 0.1956 & 0.1086 & 0.0414 & 0.0103 & 0.0015 & 0.0005 \\
\bottomrule
\end{tabular}
}
\end{table}
The expected monthly demand is therefore $\mathbb{E}(\rvdemand) = 3.9022$. Setting $\batchsizeq = 4$, which ensures long-term stability, we have $\statespace_{\stocprocremain{}{}} = \{0, 1, 2, 3\}$. This allows us to evaluate the individual distributions of $\roundbatches(\stocprocremainstate), \, \stocprocremainstate \in \statespace_{\stocprocremain{}}$ as defined in Eq. \eqref{eq:Az}, which we depict in the Table \ref{tab:dembatches}, along with the distribution of variable $\roundbatches$ from Eq. \eqref{eq:batcharr}.

\begin{table}[h!]
\centering 
\caption{Monthly demand distribution for surgeries} 
\label{tab:dembatches}
\scriptsize
\begin{tabular}{lcccc}
\toprule
$P(\roundbatches(\stocprocremainstate) = \drawnrvdemand)$ 
& \multicolumn{4}{c}{\textbf{Batch demand ($\drawnrvdemand$)}} \\ 
\cmidrule{2-5}
& 0 & 1 & 2 & 3 \\
\midrule
$\stocprocremainstate=0$ & 0.3976 & 0.5901 & 0.0123 & 0 \\
$\stocprocremainstate=1$ & 0.1880 & 0.7583 & 0.0537 & 0 \\
$\stocprocremainstate=2$ & 0.0701 & 0.7676 & 0.1618 & 0.0005 \\
$\stocprocremainstate=3$ & 0.0308 & 0.6113 & 0.3559 & 0.002 \\
\textbf{$P(\roundbatches=\drawnrvdemand)$} 
& \textbf{0.171625} & \textbf{0.681825} & \textbf{0.145925} & \textbf{0.000625} \\
\bottomrule
\end{tabular}
\end{table}

The distributions in Table \ref{tab:dembatches} follow directly from Eq. \eqref{eq:Az} and Eq. \eqref{eq:batcharr}. For example, when $\stocprocremainstate=0$, we know that the queue comprises an integer number of batches. Therefore, $P(\roundbatches(0) = 0) = p_{\rvdemand}(0) + p_{\rvdemand}(1) + p_{\rvdemand}(2) + p_{\rvdemand}(3)$ (Table \ref{tab:demprob}) as up to 3 new arrivals will not suffice to generate a new batch of $Q=4$ patients. Analogously, $\roundbatches(0)=1$ if we observe from 5 to 7 arrivals and $\roundbatches(0)=2$ when 8 to 10 patients arrive in the period, as these will suffice for 2 new batches. 

Conversely, when $\stocprocremainstate=3$, we have three patients in excess of a discrete number of batches, which means that $P(\roundbatches(3) = 0) = p_{\rvdemand}(0)$ as new batches will be formed whenever one or more patients arrive. One new batch will form if $\rvdemand \in [1,4]$, two new batches if $\rvdemand \in [5,8]$ and three new batches when $\rvdemand \in [8,10]$. Specifically, when 10 patients arrive, they will form a set of 13 patients when joined to the $\stocprocremainstate=3$ remaining patients from the start of the period. These 3 patients will then suffice to form three batches of $\batchsizeq=4$ patients, with one remaining batchless for the following period.

Note that $\mathbb{E}(\roundbatches) = 0.9755$, and that this value is equal to $\rho = \dfrac{\mathbb{E}(\rvdemand)}{\batchsizeq} = \dfrac{3.9022}{4} = 0.9755$. This is expected, as the dynamics of process $\stocprocbatches{\stocprocperiod}$ describes the original system, only looking at it in terms of batches of patients. Note that $\mathbb{E}(\roundbatches)$ is simply the expected number of batches of $\batchsizeq$ patients arriving and since we process $\mu=1$ batch per time period, we have $\rho = \mathbb{E}(\roundbatches)$ when looking at the evolution of the batch process $\stocprocbatches{\stocprocperiod}, \, \stocprocperiod \ge 0$.

Now, let us suppose that $\waittimereq = 6$ and $\conflevelrq = 0.05$ in~\eqref{eq:waitthreshold}, which means that the system requires 95\% of the patients to be operated in up to six months. To verify whether that is satisfied, we need firstly to solve Eq. \eqref{eq:ssY} for the steady state distribution of $\stocprocbatches{\stocprocperiod}, \, \stocprocperiod \ge 0$ and then use it to verify~\eqref{eq:waitthresholdy}. From Table~\ref{tab:dembatches}, we obtain $k_0=0.171625$,  $k_1=0.681825$, $k_2 = 0.145925$ and $k_3=0.000625$, which can be applied in Eq.~\eqref{eq:ssY} to find $\steadystate{\stocprocbatches{}}(0) = 0.02445$, $\steadystate{\stocprocbatches{}}(1) = 0.1182$, $\steadystate{\stocprocbatches{}}(2) = 0.1219$, $\steadystate{\stocprocbatches{}}(3) = 0.1046$, $\steadystate{\stocprocbatches{}}(4) = 0.0898$, $\steadystate{\stocprocbatches{}}(5) = 0.077$, $\steadystate{\stocprocbatches{}}(6) = 0.0661$. This yields:
\[
P(\omega \le 6 | \stocprocbatches{\stocprocperiod} \ge 1) \approx 0.602 < 0.95 = 1 - \conflevelrq.
\]
This implies that $\underline Q > 4$ in Eq. \eqref{eq:under} and we should, therefore, recalculate the probability above for increasing values of $Q$ until we find the first value that satisfies the waiting time criterion - which will give us the lower bound $\underline Q$.

To verify the waiting time requirements when $\stocprocbatches{\stocprocperiod} = 0$, one can apply \eqref{eq:upperbound}, which implies that $P(\omega \le 6 | \stocprocbatches{\stocprocperiod} = 0) = 0.171625^6 \approx 2.5 \cdot 10^{-5} < 0.05$. Therefore, to find the bound $\overline Q$ in Eq. \eqref{eq:overbound}, we must iterate on higher values of $Q$.

When verifying Eq.~\eqref{eq:under} with $Q = 10$, it implies $P(\omega \le 6 | \stocprocbatches{\stocprocperiod} = 0) = 0.609780^6 \approx 0.05114 > 0.05$. Then, if one set $\batchsizeq = 9$, it implies $P(\omega \le 6 | \stocprocbatches{\stocprocperiod} = 0) = 0.566478^6 \approx 0.03304 < 0.05$. Thus, we can set $\upq = 9$.

For the $\lowq$, one can verify $\batchsizeq = 5$, which implies $k_0=0.265720$,  $k_1=0.688120$, $k_2 = 0.046160$ and $k_3=0$, which can be applied in Eq. \eqref{eq:ssY} to find $\steadystate{\stocprocbatches{}}(0) = 0.21956$, $\steadystate{\stocprocbatches{}}(1) = 0.606723$, $\steadystate{\stocprocbatches{}}(2) = 0.143539$, $\steadystate{\stocprocbatches{}}(3) = 0.024935$, $\steadystate{\stocprocbatches{}}(4) = 0.004332$, $\steadystate{\stocprocbatches{}}(5) = 0.000752$, $\steadystate{\stocprocbatches{}}(6) = 0.000131$. This yields:
\[
P(\omega \le 6 | \stocprocbatches{\stocprocperiod} \ge 1) \approx 0.999973 > 0.95 = 1 - \conflevelrq.
\]
Therefore, we can set $\lowq = 5$.

Given the bounds $\lowq = 5$ and $\upq = 9$, any value of $\batchsizeq$ satisfying $\lowq \le \batchsizeq \le \upq$ can be selected. 
This ensures that the waiting-time requirements are met both for patients arriving to an empty queue and for those arriving when batches of patients are already waiting.

\end{ex}
%
\subsection{Module 2: Modelling Uncertainty in Surgery Duration and Block Sessions} \label{sec:module2}

This section details the components of module 2 of the proposed framework, see Section \ref{sec:proposal}. 
Firstly, it is necessary to estimate how many elective patients can be accommodated within a single OT block for each subspeciality, to ensure consistency between the periodic schedule and the waiting-time requirements established by the $(R,Q)$ policy in Section~\ref{sec:waitinglist}. 
This is key to ensuring that the number of sessions allocated to a given subspeciality is enough to warrant a number of surgeries that lies in the corresponding interval $[\lowq, \upq]$ that ensures compliance with prescribed waiting times, considering the probabilistic distribution of surgery durations. 
Specifically, a trade-off arises between idle time, that results in underutilised theatre capacity, and overtime, which may trigger cancellations of scheduled procedures. 



First, we introduce a Newsvendor-based model to guide decisions on whether to perform or cancel the next scheduled surgery, depending on the time remaining in the session. 
Then, assuming that these decisions are taken sequentially at the end of each scheduled surgery, we introduce an inventory-based model to find the number of patients that can be reliably scheduled within a time block, as well as the probability distribution of the number of cancellations at each session. 
These are inputs required for subsequent tactical OT allocation decisions.


\subsubsection{Managing surgery cancellations via a Newsvendor-based model\label{sec:cancel}}
Deciding whether to perform the next scheduled surgery in a session or cancel it due to limited remaining time directly affects both OT utilisation and overtime risk. 
Performing the surgery with insufficient time may result in overtime and disrupt subsequent scheduling decisions, whereas cancelling it despite sufficient available time may lead to underutilisation of OT capacity and reduced patient access to care.
To support this decision, we propose a Newsvendor-based model that determines an optimal time budget for the next surgery, i.e., the minimum amount of time that must remain in the session to greenlight the surgery.
The symbols used in this section are defined in Table~\ref{tab:symb_newsvendor}.

\begin{table}[htb]
    \centering
    \caption{Symbols used in the Newsvendor-based model.\label{tab:symb_newsvendor}}
    \scriptsize
    \begin{tabular}{cl}
    \toprule
    \textbf{Set} & \textbf{Description} \\
    \midrule
    $\specialityset$ & Set of specialities \\
    $\supportnv$ & Support of the surgery-duration random variable \\
    \midrule
    
    \textbf{Index} & \textbf{Description} \\
    \midrule
    $\specialityidx$ & speciality index \\
    \midrule

    \textbf{Parameter} & \textbf{Description} \\
    \midrule
    $\blocklength$ & Length of an OT block \\
    $t_b$ & Time budget allocated to the last surgery in a block \\
    $\excesscost$ & Unit cost of unused allocated time \\
    $\overtimepenalty$ & Unit penalty cost of overtime \\
    $\safetythreshold$ & Safety time threshold for starting the next surgery \\
    \midrule

    \textbf{Random Variable} & \textbf{Description} \\
    \midrule
    $\rvsurgdur$ & Random duration of a surgery \\
    $\beta_1$ & Unused allocated time \\
    $\beta_2$ & Overtime duration \\
    \midrule

    \textbf{Function} & \textbf{Description} \\
    \midrule
    $f_{\rvsurgdur}(\cdot)$ & Probability mass function of $\rvsurgdur$ \\
    $F_{\rvsurgdur}(\cdot)$ & Cumulative distribution function of $\rvsurgdur$ \\
    $u(t_b)$ & Expected cost of allocating time budget $t_b$ \\
    \midrule
    \end{tabular}
\end{table}

Let $\rvsurgdur$ be a non-negative random variable taking values from $\supportnv \subseteq \mathbb R_+$, which represents the duration of a given surgery from subspeciality $\specialityidx \in \specialityset$, with cumulative distribution $F_{\rvsurgdur}: \supportnv \to [0,1]$. Assume that, once the theatre is ready for the surgery, it only proceeds as planned if the remaining time in the session exceeds a prescribed budget of $t_{b,\specialityidx} \in \supportnv$ time units. Otherwise, the surgery is called off and the session is terminated.

The time budget $t_{b,\specialityidx} \in \supportnv$ can be interpreted as a perishable inventory of session time of subspeciality $\specialityidx \in \specialityset$. If the inventory is excessive, we will not fully utilise the time allocated to the surgical team to perform the surgery. Conversely, if we proceed with the surgery but the time budget is insufficient, the surgical team will have to work overtime to complete the surgery. Let $\excesscost > 0$ be the cost of each unit of time budget in excess of the actual surgical time $\rvsurgdur$ (i.e., idle time cost) and $\overtimepenalty > \excesscost$ be a penalty for each unit of overtime that the surgical team incurs. The expected cost of the time deviation between $\rvsurgdur$ and $t_{b,\specialityidx}$ is defined as: 
\begin{equation} \label{eq:newsvendor}
u( t_{b,\specialityidx} ) = \excesscost\,\mathbb{E} (\beta_1) + \overtimepenalty\,\mathbb{E}( \beta_2 ), 
\end{equation}
where $\beta_1 = \min\{ t_{b,\specialityidx} - \rvsurgdur, \; 0 \}$ is the underused time budget, and $\beta_2 = \max \{ \rvsurgdur - t_{b,\specialityidx}, \; 0 \}$ is the total overtime. The objective is to find an optimal time budget allocation $\safetythreshold$ that minimizes the time deviation cost in Eq. \eqref{eq:newsvendor}, which is given by:
\begin{equation} \label{eq:mintimebudget}
    \safetythreshold =  \arg \min_{t_{b,\specialityidx} \in \supportnv} u( t_{b,\specialityidx} ).
\end{equation}

\begin{lemma}
    Let $\beta_1 = \min\{ t_{b,\specialityidx} - \rvsurgdur, \; 0 \}$ and $\beta_2 = \max \{ \rvsurgdur - t_{b,\specialityidx}, \; 0 \}$. Then, it follows that:
\begin{equation} \label{eq:beta}
 \mathbb{E}(\beta_1 ) =  \int_0^{t_{b,\specialityidx}} F_{\rvsurgdur}(t) dt, \quad \text{and} \quad    \mathbb{E}(\beta_2 ) = \int_{t_{b,\specialityidx}}^\infty (1 - F_{\rvsurgdur}(t) ) dt.
\end{equation}
\end{lemma}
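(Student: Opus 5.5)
The plan is to write each quantity as an integral of an indicator function and then swap expectation and integral using Tonelli's theorem. This is the standard ``layer-cake'' representation of the positive part of a random variable.

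One point about the statement must be settled first. As written, $\beta_1 = \min\{t_{b,\specialityidx} - \rvsurgdur, 0\}$ is non-positive. Its expectation therefore cannot equal the non-negative quantity $\int_0^{t_{b,\specialityidx}} F_{\rvsurgdur}(t)\,dt$ unless both are zero. Since $\beta_1$ is described as the underused time budget, I will prove the identity for $\beta_1 = \max\{t_{b,\specialityidx} - \rvsurgdur, 0\}$, the positive part. For the literal $\min$ definition, the same argument gives $\mathbb{E}(\beta_1) = -\int_0^{t_{b,\specialityidx}} F_{\rvsurgdur}(t)\,dt$.

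\textbf{Step 1: pointwise identities.} Because $\rvsurgdur \ge 0$, for every outcome
\[
\max\{t_{b,\specialityidx} - \rvsurgdur, 0\} = \int_0^{t_{b,\specialityidx}} \1_{\{\rvsurgdur \le t\}}\,dt,
\qquad
\max\{\rvsurgdur - t_{b,\specialityidx}, 0\} = \int_{t_{b,\specialityidx}}^{\infty} \1_{\{\rvsurgdur > t\}}\,dt .
\]
For the first identity, the integrand equals $1$ exactly for $t \in [\rvsurgdur, t_{b,\specialityidx}]$ when $\rvsurgdur \le t_{b,\specialityidx}$, and it vanishes on $[0, t_{b,\specialityidx}]$ otherwise. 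The second identity follows by the symmetric reasoning.

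\textbf{Step 2: interchange.} Take expectations of both identities. The integrands are non-negative and jointly measurable in $(t,\omega)$, so Tonelli's theorem lets us interchange $\mathbb{E}$ and $\int$. This gives $\int_0^{t_{b,\specialityidx}} P(\rvsurgdur \le t)\,dt = \int_0^{t_{b,\specialityidx}} F_{\rvsurgdur}(t)\,dt$ for $\beta_1$. It gives $\int_{t_{b,\specialityidx}}^\infty P(\rvsurgdur > t)\,dt = \int_{t_{b,\specialityidx}}^\infty (1 - F_{\rvsurgdur}(t))\,dt$ for $\beta_2$.

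The main thing to check is the justification of this interchange and the treatment of possibly infinite values. Tonelli requires no integrability, so the identity for $\beta_2$ holds in $[0,\infty]$. Both sides are finite exactly when $\mathbb{E}(\rvsurgdur) < \infty$, which is implicitly needed for $u(t_{b,\specialityidx})$ to be finite. A minor further detail concerns a discrete support $\supportnv$: there the choice between $P(\rvsurgdur \le t)$ and $P(\rvsurgdur < t)$ only affects countably many $t$, a Lebesgue-null set, so the integrals are unaffected. With $F_{\rvsurgdur}$ taken as right-continuous and extended by $F_{\rvsurgdur}(t) = P(\rvsurgdur \le t)$ for all real $t \ge 0$, the stated formulas follow.
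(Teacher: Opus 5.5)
Your argument is correct and essentially matches the paper's. The paper applies $\mathbb{E}(X)=\int_0^\infty P(X>t)\,dt$ to $\beta_1$ and $\beta_2$ and then reflects or shifts the integration variable, which is the same Tonelli computation you do pointwise. Like you, it tacitly takes $\beta_1=\max\{t_{b,s}-D_s,0\}$ (it begins with ``since $\beta_1\ge 0$''), and your remarks on Tonelli, the null set where $P(D_s<t)\neq P(D_s\le t)$, and extending $F_{D_s}$ beyond $\Omega_{dur}$ tighten steps the paper glosses over.

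One correction to your aside: under the literal definition, $\min\{t_{b,s}-D_s,0\}=-\max\{D_s-t_{b,s},0\}=-\beta_2$. Its expectation is therefore $-\int_{t_{b,s}}^\infty(1-F_{D_s}(t))\,dt$, not $-\int_0^{t_{b,s}}F_{D_s}(t)\,dt$.
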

\begin{proof}
We will start by proving the result for $\beta_1$. Since $\beta_1 \ge 0$, it follows that
\begin{align*}
\mathbb{E}(\beta_1) = \int_{t=0}^\infty P ( \beta_1 > t ) dt = \int_{t=0}^{t_{b,\specialityidx}} P( t_{b,\specialityidx} - \rvsurgdur > t ) dt 
=  \int_{t=0}^{t_{b,\specialityidx}} P( \rvsurgdur < t_{b,\specialityidx} - t ) dt = \int_{t=0}^{t_{b,\specialityidx}} F_{\rvsurgdur}(t_{b,\specialityidx} - t) dt 
= \int_{t=0}^{t_{b,\specialityidx}} F_{\rvsurgdur}(t) dt.
\end{align*}

Since $\beta_2$ is also non-negative, we can write
\begin{align*}
    \mathbb{E}(\beta_2) =  \int_0^\infty P( \rvsurgdur - t_{b,\specialityidx} > t) dt = \int_0^\infty P( \rvsurgdur  > t + t_{b,\specialityidx}) dt
    = \int_{t_{b,\specialityidx}}^\infty P( \rvsurgdur  > t ) dt = \int_{t_{b,\specialityidx}}^\infty (1 - F_{\rvsurgdur}(t) ) dt.
\end{align*}
\end{proof}

\begin{theo} \label{thm:quantile}
Let $u(t_{b,\specialityidx})$ be defined as in Eq.~\eqref{eq:newsvendor}, and $\safetythreshold$ be a solution to Eq. \eqref{eq:mintimebudget}. Then, if follows that
\begin{equation} \label{eq:optimalquantile}
\safetythreshold = F_{\rvsurgdur}^{-1} \left( \dfrac{ \dfrac{\overtimepenalty}{\excesscost}}{1 + \dfrac{\overtimepenalty}{\excesscost}}\right),
\end{equation}
where $F_{\rvsurgdur}^{-1}(\cdot)$ is the inverse of the cumulative distribution function $F_{\rvsurgdur}(\cdot)$.
\end{theo}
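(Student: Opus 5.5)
The plan is to use the Lemma to write $u$ as an explicit function of the time budget and then apply the first-order optimality condition of the classical newsvendor problem. By the Lemma,
\[
u(t_{b,\specialityidx}) = \excesscost \int_0^{t_{b,\specialityidx}} F_{\rvsurgdur}(t)\,dt + \overtimepenalty \int_{t_{b,\specialityidx}}^\infty \bigl(1 - F_{\rvsurgdur}(t)\bigr)\,dt .
\]
Here $\beta_1$ is interpreted as the nonnegative idle time $\max\{t_{b,\specialityidx}-\rvsurgdur,0\}$, which is how the Lemma's proof treats it. We assume $\mathbb{E}(\rvsurgdur)<\infty$, so both integrals are finite.

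The first step is convexity. By the fundamental theorem of calculus, the first integral has right derivative $F_{\rvsurgdur}(t_{b,\specialityidx})$, since $F_{\rvsurgdur}$ is right-continuous. The second integral has right derivative $-(1-F_{\rvsurgdur}(t_{b,\specialityidx}))$. The right derivative of $u$ is therefore
\[
u'_+(t_{b,\specialityidx}) = (\excesscost+\overtimepenalty)F_{\rvsurgdur}(t_{b,\specialityidx}) - \overtimepenalty .
\]
This is nondecreasing in $t_{b,\specialityidx}$. Hence $u$ is convex on $\mathbb{R}_+$, with left derivative $(\excesscost+\overtimepenalty)F_{\rvsurgdur}(t_{b,\specialityidx}^-) - \overtimepenalty$.

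The second step is the optimality condition. A point minimises a convex function exactly when its left derivative is $\le 0$ and its right derivative is $\ge 0$. This reads
\[
F_{\rvsurgdur}(t^-) \le \frac{\overtimepenalty}{\excesscost+\overtimepenalty} \le F_{\rvsurgdur}(t).
\]
Dividing numerator and denominator by $\excesscost$ gives exactly the ratio in Eq.~\eqref{eq:optimalquantile}. Its value lies in $(1/2,1)$ because $\overtimepenalty>\excesscost>0$. The smallest $t$ satisfying the condition is the generalised inverse $F_{\rvsurgdur}^{-1}(q)=\inf\{t: F_{\rvsurgdur}(t)\ge q\}$.

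The third step checks that this minimiser is admissible for Eq.~\eqref{eq:mintimebudget}, where the minimisation is over $\supportnv$ only. The generalised inverse of a CDF always lies in the support of the distribution, since it is a point of increase of $F_{\rvsurgdur}$. So the unconstrained minimiser is feasible and also solves the restricted problem.

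The main obstacle is the nonsmooth or discrete case, where $F_{\rvsurgdur}$ has atoms or flat segments. There the derivative need not vanish and $F_{\rvsurgdur}^{-1}$ is not a true inverse. I would handle this through the subgradient condition above, read ``inverse'' as the generalised inverse, and note that when $F_{\rvsurgdur}$ is continuous and strictly increasing the condition reduces to $F_{\rvsurgdur}(\safetythreshold)=\overtimepenalty/(\excesscost+\overtimepenalty)$. If $\overtimepenalty/(\excesscost+\overtimepenalty)$ equals $F_{\rvsurgdur}$ on a flat segment, the argmin is an interval and the formula picks its left endpoint; I would state this non-uniqueness explicitly.
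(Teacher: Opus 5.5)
Your proof is correct. It shares the paper's skeleton: substitute the Lemma into $u$ and impose a first-order condition. It is, however, a genuinely more general argument.

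The paper differentiates $u$, sets $c\,F_{D_s}(t^*_s) - p\,(1-F_{D_s}(t^*_s)) = 0$, and inverts $F_{D_s}$. This tacitly assumes $F_{D_s}$ is continuous and strictly increasing, and it never checks that the stationary point is a minimiser.

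You proceed differently at each step:
\begin{itemize}
\item You obtain convexity from the nondecreasing right derivative $(c+p)F_{D_s}(t)-p$, which is what makes a first-order condition sufficient.
\item You replace stationarity by the subgradient condition $F_{D_s}(t^-)\le p/(c+p)\le F_{D_s}(t)$.
\item You read $F_{D_s}^{-1}$ as the generalised inverse.
\item You check that this point lies in $\Omega_{dur}$, so it also solves the restricted problem.
\end{itemize}

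This matters for the paper itself. There $f_{D_s}$ is an empirically estimated probability mass function, so $F_{D_s}$ is a step function. The derivative of $u$ does not exist at the atoms, and the paper's equation $F_{D_s}(t^*_s)=p/(c+p)$ generally has no solution. Your version is the one that actually covers the case study.

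Your version also shows that the theorem's literal wording (every solution equals the quantile) requires uniqueness. If $F_{D_s}$ equals $p/(c+p)$ on a flat segment $[a,b)$, both $a$ and $b$ are minimisers lying in $\Omega_{dur}$, and the formula selects $a$, exactly as you propose to state. This is not hypothetical with $p=2c$ and empirical data: a cumulative frequency can hit $2/3$ exactly.

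Your reinterpretation of $\beta_1$ as $\max\{t_{b,s}-D_s,0\}$ is also necessary. With the paper's $\min$, the quantity is nonpositive, and the Lemma's tail-integral argument does not apply.

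In short, the paper's route buys brevity and the familiar critical-ratio intuition. Yours buys a statement that is actually true for the discrete distributions the framework uses.
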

\begin{proof}
    Substituting \eqref{eq:beta} in Eq. \eqref{eq:newsvendor} yields
\[
u( t_{b,\specialityidx} ) = \excesscost  \int_0^{t_{b,\specialityidx}} F_{\rvsurgdur}(t) dt + \overtimepenalty \int_{t_{b,\specialityidx}}^\infty (1 - F_{\rvsurgdur}(t) ) dt.
\]
When $t_{b,\specialityidx} = \safetythreshold$, the first order optimality condition for $u(t_{b,\specialityidx})$ implies
\[
\dfrac{d u( t_{b,\specialityidx} )}{d t_{b,\specialityidx}} = \excesscost F_{\rvsurgdur}(\safetythreshold) - \overtimepenalty ( 1 - F_{\rvsurgdur}(\safetythreshold) ) = 0.
\]
Hence, we must have
\[
\excesscost F_{\rvsurgdur}(\safetythreshold) - \overtimepenalty ( 1 - F_{\rvsurgdur}(\safetythreshold) ) = 0 \implies \dfrac{F_{\rvsurgdur}(\safetythreshold)}{( 1 - F_{\rvsurgdur}(\safetythreshold) )} = \dfrac{\overtimepenalty}{\excesscost} \implies F_{\rvsurgdur}(\safetythreshold) \left( 1 + \dfrac{\overtimepenalty}{\excesscost} \right) = \dfrac{\overtimepenalty}{\excesscost}.
\]
We conclude the proof by noting that \eqref{eq:optimalquantile} follows from the expression above.
\end{proof}

Theorem \ref{thm:quantile} implies that, whenever the remaining time in a session exceeds the quantile $\safetythreshold$ in Eq. \eqref{eq:optimalquantile}, the surgery is greenlighted to continue. Otherwise, the surgery is cancelled and the session comes to a close. Note that the quantile is solely a function of the ratio between the overtime cost $\overtimepenalty$ and the cost of unused session time $\excesscost$, and is applicable to any surgery time distribution. For example, if $\overtimepenalty = 2\excesscost$, $\safetythreshold = F^{-1}_{\rvsurgdur} \left( \dfrac{2}{3} \right)$. This means that at least two thirds of the greenlighted surgeries will not require overtime.

\subsubsection{Applying the Newsvendor-based model \label{sec:applicationnewsvendor}}
Building on the theoretical results of the previous subsection, we estimate, for each surgical subspeciality, the distribution of the number of surgeries that can be completed within a block of length $\blocklength \in \mathbb{Z}_{+}$. 
This distribution is then used to evaluate session-planning decisions under an inventory-based formulation.
For this purpose, we use the notation introduced in Table~\ref{tab:symb_appnewsvendor}. 

\begin{table}[htb]
    \centering
    \caption{Symbols used in the derivation of the completion distribution.\label{tab:symb_appnewsvendor}}
    \scriptsize
    \begin{tabular}{cl}
    \toprule
    \textbf{Set / Index / Parameter} & \textbf{Description} \\
    \midrule
    $\specialityset$ & Set of surgical subspecialities \\
    $\specialityidx$ & Index of a surgical subspeciality, such that $\specialityidx \in \specialityset$ \\
    $\blocklength$ & Length of the operating block \\
    $\safetythreshold$ & Safety time threshold that must remain available in the block to accommodate one further surgery \\
    $\probstoppingthreshold$ & Probability threshold used to truncate the tail of the distribution \\
    $\kmax$ & Largest value of $k$ such that $P(\numcompletedsurgs > k) > \probstoppingthreshold$, for $\specialityidx \in \specialityset$ \\
    \midrule

    \textbf{Random Variable} & \textbf{Description} \\
    \midrule
    $\rvsurgdur$ & Surgery duration for a given subspeciality $\specialityidx \in \specialityset$ \\
    $\cumulativeduration{k}$ & Cumulative duration of the first $k$ surgeries, $\cumulativeduration{k}=\sum_{i=1}^{k}\rvsurgdur^{(i)}$, for $\specialityidx \in \specialityset$\\
    $\numcompletedsurgs$ & Number of surgeries that can be completed within a block, for $\specialityidx \in \specialityset$ \\
    \midrule

    \textbf{Function / Derived Quantity} & \textbf{Description} \\
    \midrule
    $f_{\rvsurgdur}$ & Probability mass function of the duration of a single surgery \\
    $F_{\rvsurgdur}$ & Cumulative distribution function of the duration of a single surgery \\
    $f_{\cumulativeduration{k}}$ & Probability mass function of $\cumulativeduration{k}$ \\
    $F_{\cumulativeduration{k}}$ & Cumulative distribution function of $\cumulativeduration{k}$ \\
    $f_{\numcompletedsurgs}$ & Probability mass function of $\numcompletedsurgs$ \\
    $F_{\numcompletedsurgs}$ & Cumulative distribution function of $\numcompletedsurgs$ \\
    $\ast$ & Convolution operator \\
    $\survivalvector$ & Vector with entries $\survivalvector[k]=P(\numcompletedsurgs > k)$ \\
    \bottomrule
    \end{tabular}
\end{table}

For any $k \in \mathbb{N}_{+}$ and $\specialityidx \in \specialityset$, the cumulative duration of the first $k$ surgeries is
\[
\cumulativeduration{k}=\sum_{i=1}^{k}\rvsurgdur^{(i)},
\]
with $f_{\cumulativeduration{k}} : \supportnv \to [0,1]$ and $F_{\cumulativeduration{k}} : \supportnv \to [0,1]$ denoting the probability mass function and cumulative distribution function of $\cumulativeduration{k}$, respectively.

Let $\numcompletedsurgs$ be the random variable denoting the number of surgeries of subspeciality $\specialityidx \in \specialityset$ that can be completed within a block of length $\blocklength$.
After $\completedsurgsidx \in \mathbb{N}_+$ surgeries have been completed, one further surgery can be accommodated only if the residual time in the block exceeds the safety threshold $\safetythreshold$. 
Therefore,
\begin{equation}\label{eq:survival_k_revised}
P(\numcompletedsurgs > \completedsurgsidx)
=
P(\blocklength-\cumulativeduration{\completedsurgsidx}>\safetythreshold)
=
P(\cumulativeduration{\completedsurgsidx}<\blocklength-\safetythreshold).
\end{equation}
Thus, the event $\{\numcompletedsurgs > \completedsurgsidx\}$ is equivalent to the event that, after completing $\completedsurgsidx$ surgeries, sufficient time remains in the block to start one additional surgery while preserving the required safety threshold $\safetythreshold$.

Because $\cumulativeduration{\completedsurgsidx}$ is the sum of $\completedsurgsidx$ independent surgery durations, its distribution can be obtained by the following recurrence, where $\ast$ denotes the convolution operator:
\[
f_{\cumulativeduration{\completedsurgsidx}} =
\begin{cases}
f_{\rvsurgdur}, 
& \text{if } \completedsurgsidx = 1, \\[4pt]
f_{\cumulativeduration{\completedsurgsidx-1}} \ast f_{\rvsurgdur},
& \text{otherwise}.
\end{cases}
\]

For each $\completedsurgsidx$, Eq.~\eqref{eq:survival_k_revised} then yields the survival probability $P(\numcompletedsurgs > \completedsurgsidx)$.
These survival probabilities are computed sequentially in Algorithm \ref{alg:newsvendor} and stored in the vector $\survivalvector$, for each $\specialityidx \in \specialityset$, whose $\completedsurgsidx$-th entry is defined as
\[
\survivalvector[\completedsurgsidx]=P(\numcompletedsurgs > \completedsurgsidx).
\]
The procedure starts at $\completedsurgsidx=0$, with $\survivalvector[0]=1$ (i.e., the first surgery is always completed), and stops when $\survivalvector[\completedsurgsidx]\leq \probstoppingthreshold$ to preserve relevant support of the distribution while avoiding unnecessary convolution steps in the negligible tail. 
For each subspeciality, the algorithm therefore returns both the vector $\survivalvector$ and the largest relevant index $k$, defined as $\kmax$.

\begin{algorithm}[H]
\scriptsize{
\caption{Computation of $P(\numcompletedsurgs > \completedsurgsidx)$ via convolutions\label{alg:newsvendor}}
\begin{algorithmic}[1]
\State \textbf{Input:} $\blocklength$, $\safetythreshold$, $\probstoppingthreshold$, subspecialities $\specialityidx \in \specialityset$, and duration distributions $f_{\rvsurgdur}$ for each $\specialityidx \in \specialityset$
\State \textbf{Output:} vectors $\survivalvector$, $\specialityidx \in \specialityset$, such that $\survivalvector[\completedsurgsidx]=P(\numcompletedsurgs > \completedsurgsidx)$; and $\kmax$ values for each $\specialityidx \in \specialityset$
\For{each $\specialityidx \in \specialityset$}
    \State Initialise $\survivalvector[0] \gets 1$, $\completedsurgsidx \gets 1$, and $f_{\cumulativeduration{1}} \gets f_{\rvsurgdur}$
    \While{$\survivalvector[\completedsurgsidx-1] > \probstoppingthreshold$}
        \If{$\completedsurgsidx > 1$}
            \State $f_{\cumulativeduration{\completedsurgsidx}} \gets f_{\cumulativeduration{\completedsurgsidx-1}} \ast f_{\rvsurgdur}$
        \EndIf
        \State $\survivalvector[\completedsurgsidx] \gets P(\cumulativeduration{\completedsurgsidx} < \blocklength - \safetythreshold)$
        \State $\completedsurgsidx \gets \completedsurgsidx + 1$
    \EndWhile
    \State $\kmax \gets \completedsurgsidx - 1$
\EndFor
\end{algorithmic}
}
\end{algorithm}

Once the survival function has been obtained, the probability distribution of $\numcompletedsurgs$ follows from the standard relationship between the probability mass function and the survival function of a discrete random variable, that is, for $\completedsurgsidx \geq 1$,
\begin{equation}\label{eq:pmf_numcompletedsurgs_revised}
P(\numcompletedsurgs = \completedsurgsidx)
=
P(\numcompletedsurgs > \completedsurgsidx-1)-P(\numcompletedsurgs > \completedsurgsidx).
\end{equation}

Hence, the convolution procedure yields the probability distribution of the number of surgeries that can be completed within the block for each subspeciality. 
This distribution is the main output of the present subsection and constitutes the probabilistic input to the inventory-based planning model described next.

\subsubsection{Inventory-based model for defining the number of surgeries in a time block\label{sec:patientsperblock}}

Using the results derived in the previous subsection, we now determine the number of surgeries to be planned within a single OT block. 
This decision is formulated through an inventory-based~\citep{Axsater2015Inv} perspective that balances the benefit of scheduling additional surgeries against the expected cost of cancellations.
The symbols used in this subsection are defined in Table~\ref{tab:symb_inventory}.

\begin{table}[htb]
    \centering
    \caption{Symbols used in the inventory-based planning model.\label{tab:symb_inventory}}
    \scriptsize
    \begin{tabular}{cl}
    \toprule
    \textbf{Parameter / Decision Variable} & \textbf{Description} \\
    \midrule
    $\plannedsurgs$ & Number of surgeries planned in a block \\
    $\optplannedsurgs$ & Optimal number of surgeries planned in a block \\
    $\cancelledsurgeries$ & Number of cancelled surgeries \\
    $\marginalvalue$ & Marginal value associated with scheduling one additional surgery \\
    \midrule

    \textbf{Function / Derived Quantity} & \textbf{Description} \\
    \midrule
    $\gain{\plannedsurgs}$ & Total gain associated with planning $\plannedsurgs$ surgeries \\
    $\cancellationfunction(\cancelledsurgeries)$ & Cost incurred when $\cancelledsurgeries$ surgeries are cancelled \\
    $\cancellationcost(\plannedsurgs)$ & Expected cancellation cost when $\plannedsurgs$ surgeries are planned \\
    $\Delta(\plannedsurgs)$ & Net value of planning $\plannedsurgs$ surgeries \\
    \bottomrule
    \end{tabular}
\end{table}

Let $\plannedsurgs \in \mathbb{N}_{+}$ denote the number of surgeries planned in a block. 
Following the previous definition that the random number of surgeries that can indeed be completed is $\numcompletedsurgs$, and assuming that $\numcompletedsurgs \leq \plannedsurgs$, the number of cancellations is given by
\[
\cancelledsurgeries = \plannedsurgs-\numcompletedsurgs.
\]

Hence, for $\cancelledsurgeries = 0,1,\dots,\plannedsurgs$, the probability of cancelling $\cancelledsurgeries$ surgeries when $\plannedsurgs$ surgeries are planned is given by
\begin{equation}\label{eq:cancellation_prob}
P\bigl(\numcompletedsurgs = \plannedsurgs-\cancelledsurgeries\bigr),
\qquad \cancelledsurgeries = 0, 1,\dots,\plannedsurgs.
\end{equation}

We next associate each planning decision with an expected total gain and an expected cancellation cost. 
We define the expected total gain of adding an $\plannedsurgs-$th surgery as:
\begin{equation}\label{eq:gain_function_revised}
    \gain{\plannedsurgs} = \marginalvalue \cdot \sum_{j=0}^n P(\numcompletedsurgs \geq j),
\end{equation}
where $\marginalvalue>0$ is a constant representing the marginal value of scheduling one additional surgery. 
Eq.~\eqref{eq:gain_function_revised} is directly obtained from the survival probabilities computed in the previous subsection, since $P(\numcompletedsurgs \geq j)=P(\numcompletedsurgs > j-1)$.

In this setting, $\gain{\plannedsurgs}$ is not interpreted as a direct monetary return, but rather as the total benefit associated with the number of planned surgeries. 
This interpretation is particularly suitable in public healthcare settings, where the value of additional surgeries may reflect improved access to care, reduced waiting times, and better utilisation of OT capacity.

The cost associated with cancellations is modelled through a function $\cancellationfunction:\mathbb{Z}_{+} \rightarrow\mathbb{R}_{+}$ satisfying $\cancellationfunction(0) = 0$ and assumed to be non-decreasing. 
The expected cancellation cost when planning $\plannedsurgs$ surgeries is therefore
\begin{equation}\label{eq:cost_function_revised}
    \cancellationcost(\plannedsurgs)
    =
    \mathbb{E}\bigl[\cancellationfunction\bigl(\plannedsurgs-\numcompletedsurgs)]
    =
    \sum_{k=0}^{n} P(\numcompletedsurgs=k)\, \cdot \cancellationfunction\bigl(\plannedsurgs-k).
\end{equation}

The analogous net value associated with planning $\plannedsurgs$ surgeries is then defined as
\begin{equation}\label{eq:net_value_revised}
    \Delta(\plannedsurgs)=\gain{\plannedsurgs}-\cancellationcost(\plannedsurgs).
\end{equation}

Because $\gain{\plannedsurgs}$ represents the overall benefit scheduling $\plannedsurgs$ surgeries, the optimal planning level is the largest value for which the net value remains positive, i.e.,
\begin{equation}\label{eq:optimal_planned_surgeries}
    \optplannedsurgs=\max\{\plannedsurgs \in \mathbb{N}_{+} : \Delta(\plannedsurgs)>0\}.
\end{equation}
Thus, $\optplannedsurgs$ identifies the last point at which the total gain of scheduling a given number of surgeries exceeds the expected cost of cancellations.

Once $\optplannedsurgs$ has been determined, the relevant probabilistic input for the subsequent OT allocation model is the distribution of the number of surgeries completed under that planning decision, i.e.,
\begin{equation} \label{eq:distnew}
P(\numcompletedsurgs = \completedsurgsidx), \qquad \completedsurgsidx=0,1,\dots,\optplannedsurgs,
\end{equation}
together with the implied probability distribution of the number of cancellations. These quantities are then passed to the next stage of the framework.
%
\subsection{Module 3: Solving an OT Scheduling Problem\label{sec:ot}}

Following the proposed integrated framework, the final step is to solve an OTSP, which addresses the tactical allocation of surgical subspecialities to OT blocks. 
This stage translates the strategic decisions derived in the preceding modules into implementable schedules for hospital operations. 
In particular, the OTSP takes as input the target number of surgeries for each subspeciality, as determined by the $(R,Q)$ policy in the \rqmodule~module, together with the effective block capacity obtained from the \overtimemodule~module.

We formulate the OTSP as a Mixed-Integer Linear Programming (MILP) model. 
This formulation allows the explicit representation of practical constraints and multiple, potentially competing objectives, while ensuring optimality with respect to the specified criteria. 
The model is designed to reflect hospital priorities such as high OT utilisation, the allocation of priority subspecialities, and the efficient sequencing of blocks from a logistical perspective.

The proposed framework is, however, not restricted to the MILP approach adopted here. 
Since the previous modules provide demand and effective capacity as structured inputs, alternative solution methods, including heuristics and metaheuristics, could also be employed in the OTSP stage.
The main value of the framework therefore lies in the integration of decision layers, rather than in the exclusive use of a particular scheduling technique.

In the following subsections, we first describe how the strategic outputs of the $(R,Q)$ policy and the effective capacity information are preprocessed into block-demand units, which constitute the relevant unit of the scheduling problem. 
We then present the MILP formulation of the OTSP, including its decision variables, objective functions, and constraints.

\subsubsection{Preprocessing Strategic Input\label{subsec:preprocess_otsp}}

The OTSP is formulated at the block level. 
Therefore, the strategic outputs of the previous modules must be converted into the number of blocks required for each subspeciality.

For each subspeciality $\specialityidx \in \specialityset$, let
\[
\probdistplan(k) = P(\numcompletedsurgs = k), \qquad k=0,1,\dots,\optplannedsurgs_{\specialityidx},
\]
denote the probability distribution of the number of surgeries that can be completed in a single block under the optimal planning decision obtained in the \overtimemodule~module.
This distribution summarises the effective stochastic capacity of one block for subspeciality $\specialityidx$.

To determine, for each $\specialityidx \in \specialityset$, the number of blocks required to achieve $(R,Q)$ policy targets $\lowq_{\specialityidx}$ and $\upq_{\specialityidx}$ at confidence level $\conflevelot \in [0,1]$, we use successive convolutions of $\probdistplan(\cdot)$. 
Let
\[
\probdistplan^{(n)} = \underbrace{\probdistplan \ast \cdots \ast \probdistplan}_{n \text{ times}}
\]
denote the probability distribution of the total number of surgeries that can be completed across $n$ blocks allocated to subspeciality $\specialityidx$, and let $\cumulpdp^{(n)}$ be its corresponding cumulative distribution function.

Then, for each target $\targetq \in \{\lowq_{\specialityidx},\upq_{\specialityidx}\}$, and for each $\specialityidx \in \specialityset$, the minimum number of blocks required is the smallest integer $n \in \mathbb{N}_{+}$ such that
\begin{equation}\label{eq:block_requirement}
P\!\left(\sum_{b=1}^{n} \numcompletedsurgs^{(b)} \geq \targetq\right)
=
1 - \cumulpdp^{(n)}(\targetq - 1)
\geq
\conflevelot,
\end{equation}
where $\numcompletedsurgs^{(b)}$ extends the random variable $\numcompletedsurgs$ and denotes random variable the number of surgeries completed in the $b$-th block, assumed independent and identically distributed according to $\probdistplan(\cdot)$.

In other words, we seek the smallest number of blocks such that the probability of completing at least $\targetq$ surgeries is no smaller than $\conflevelot$. 
This procedure is applied separately to the lower and upper activity targets. Algorithm~\ref{alg:blockdemand} summarises the computation.

\begin{algorithm}[H]
\scriptsize{
\caption{Block demand estimation via successive convolutions\label{alg:blockdemand}}
\begin{algorithmic}[1]
\State \textbf{Input:} set of subspecialities $\specialityset$, targets $(\lowq_{\specialityidx},\upq_{\specialityidx})$, confidence level $\conflevelot$, $\probdistplan(\cdot)$ probabilities
\State \textbf{Output:} block requirements $(\lowerblockdemand,\upperblockdemand)$ for each $\specialityidx \in \specialityset$

\For{each subspeciality $\specialityidx \in \specialityset$}
    \If{$\lowq_{\specialityidx} = 0$ and $\upq_{\specialityidx} = 0$}
        \State $\lowerblockdemand \gets 0$, $\upperblockdemand \gets 0$
    \Else
        \For{each $\targetq \in \{\lowq_{\specialityidx},\upq_{\specialityidx}\}$}
            \State $n \gets 1$
            \State $\probdistplan^{(n)} \gets \probdistplan$
            \State $\cumulpdp^{(n)} \gets \calccdf\!\left(\probdistplan^{(n)}\right)$
            \While{$1 - \cumulpdp^{(n)}(\targetq - 1) < \conflevelot$}
                \State $\probdistplan^{(n+1)} \gets \convolve\!\left(\probdistplan^{(n)}, \probdistplan\right)$
                \State $n \gets n + 1$
                \State $\cumulpdp^{(n)} \gets \calccdf\!\left(\probdistplan^{(n)}\right)$
            \EndWhile
            \If{$\targetq = \lowq_{\specialityidx}$}
                \State $\lowerblockdemand \gets n$
            \Else
                \State $\upperblockdemand \gets n$
            \EndIf
        \EndFor
    \EndIf
\EndFor
\end{algorithmic}
}
\end{algorithm}

This preprocessing step provides a mechanism for propagating the uncertainty captured in the strategic layer into the tactical scheduling stage without explicitly formulating the OTSP as a stochastic optimisation model. 
As a result, the scheduling problem can be solved as a deterministic MILP while remaining consistent with surgery allocation targets that reflect both variability in surgery durations and the effective capacity of each block, while ensuring prescribed long-term waiting time targets across all surgical subspecialities.

\subsubsection{MILP Formulation\label{subsec:milp}}

Modelling the OTSP as a MILP allows us to capture, in a structured manner, the interactions between subspeciality demand, OT block capacity, and scheduling requirements. 
Throughout the model, we use the notation defined in Table~\ref{tab:symb}.

Let $\roomset$ denote the set of operating theatres, $\planninghorizonset$ the set of days in the planning horizon, and $\timeperiodset$ the set of shift periods. 
The set of blocks is defined as $\blockset=\planninghorizonset\times\timeperiodset$, so that each block corresponds to a day--period pair. 
Let $\specialityset$ denote the set of surgical subspecialities, with $\specialitymset \subseteq \specialityset$, $\specialityprset \subseteq \specialityset$, and $\specialitytxset \subseteq \specialityset$ denoting, respectively, the subspecialities that require two sequential blocks, priority subspecialities, and organ transplant subspecialities. 
For each day $\planninghorizonidx \in \planninghorizonset$, let $\blockidx_1(\planninghorizonidx)$ and $\blockidx_2(\planninghorizonidx)$ denote the morning and afternoon blocks of that day, respectively. 

Let $\blockvar \in \{0,1\}$ indicate whether theatre $\roomidx \in \roomset$, in block $\blockidx \in \blockset$, is allocated to subspeciality $\specialityidx \in \specialityset$. 
Variable $\precvar \in \{0,1\}$ indicates whether theatre $\roomidx \in \roomset$, on day $\planninghorizonidx \in \planninghorizonset$, is allocated exclusively to subspeciality $\specialityidx \in \specialityset$ in the sequential blocks of that day. 
Variable $\integervar \in \mathbb{Z}_+$ denotes the number of transplant pairs allocated in block $\blockidx \in \blockset$ to subspeciality $\specialityidx \in \specialitytxset$. 
Finally, $\teamslackvar \in \mathbb{Z}_+$ and $\anestslackvar \in \mathbb{Z}_+$ denote, respectively, the slack variable for the number of teams available in block $\blockidx \in \blockset$ for subspeciality $\specialityidx \in \specialityset$, and the slack variable for the number of anaesthetists available in block $\blockidx \in \blockset$.

\begin{table}[htb]
    \centering
    \caption{\label{tab:symb}Table of Symbols}
    \scriptsize
    \begin{tabular}{cl}
    \toprule
    \textbf{Set}                    & \textbf{Description}                                                                       \\
    \midrule
        $\roomset$                       & OTs (both elective and outpatient)                       \\
        $\planninghorizonset$            & Days of the planning horizon, such that $\planninghorizonset = \{1, 2, ..., \days\}$  \\
        $\timeperiodset$                 & Shift periods, such that $\timeperiodset = \{\text{morning}, \text{afternoon}\}$          \\
        $\blockset$                      & Blocks, such that $\blockset = \planninghorizonset \times \timeperiodset$     \\
        $\blockmset$                     & Subset of $\blockset$ that includes only blocks in which at least one subspeciality requires two sequential allocated blocks, such that $\blockmset \subset \blockset$ \\
        $\specialityset$                  & Surgical subspecialities                                                                \\
        $\specialitymset$                 & Surgical subspecialities that necessarily require two sequential allocated blocks, such that $\specialitymset \subset \specialityset$          \\
        $\specialityprset$                & Priority surgical subspecialities, such that $\specialityprset \subset \specialityset$  \\
        $\specialitytxset$                & Organ transplant surgical subspecialities, such that $\specialitytxset \subset \specialityset$                                                                                                              \\
    \midrule
    \textbf{Index}                           & \textbf{Description}                                                                  \\
    \midrule
        $\roomidx$ & OT, with $\roomidx \in \roomset$ \\ 
        $\blockidx$ & Block, with $\blockidx \in \blockset$ \\ 
        $\planninghorizonidx$ & Day of the planning horizon, with $\planninghorizonidx \in \planninghorizonset$\\
        $\blockidx_1(\planninghorizonidx),\blockidx_2(\planninghorizonidx)$ & Morning and afternoon blocks $\blockidx \in \blockset$ associated with day $\planninghorizonidx\in\planninghorizonset$ \\
        $\specialityidx$ & Surgical subspeciality, with $\specialityidx \in \specialityset$ \\
    \midrule
    \textbf{Parameter}                        & \textbf{Description}                                                                  \\
    \midrule
        $\upperblockdemand$                                         & Maximum number of blocks allocated to each subspeciality $\specialityidx \in \specialityset$            \\
        $\lowerblockdemand$                                         & Minimum number of blocks allocated to each subspeciality $\specialityidx \in \specialityset$            \\
        $\specteams_{\blockidx,\specialityidx}$              & Number of teams of subspeciality $\specialityidx \in \specialityset$ available in block  $\blockidx \in \blockset$ \\
        $\numbersanests_{\blockidx}$                       & Number of anaesthetists available in block $\blockidx \in \blockset$                         \\
        $\needanest_{\specialityidx}$                       & 1 if subspeciality $\specialityidx \in \specialityset$ requires an anaesthetist; 0 otherwise         \\    
        $\anestconstant$                                   & Constant to penalize the anaesthetist slack variable $\anestslackvar$ in the objective function ($\anestconstant = |\blockset|\cdot|\specialityset|\cdot|\roomset|$) \\
        $\allowedrooms_{\roomidx,\blockidx,\specialityidx}$   & 1 if theatre $\roomidx \in \roomset$, in block $\blockidx \in \blockset$, is preferred to be by subspeciality $\specialityidx \in \specialityset$ in the shift block $\blockidx \in \blockset$; 0 otherwise\\
    \bottomrule
    \textbf{Variable}                         & \textbf{Description}                                                                  \\
    \midrule
        $\blockvar$                                        & 1 if theatre $\roomidx \in \roomset$, in block $\blockidx \in \blockset$, is allocated to subspeciality $\specialityidx \in \specialityset$; 0 otherwise. \\
                                                           & This variable is only created if the allocation is feasible (i.e., speciality $\specialityidx$ is compatible with the theatre $\roomidx$).                \\
        $\precvar$                                         & 1 when theatre $\roomidx \in \roomset$, on day $\planninghorizonidx \in \planninghorizonset$, is allocated only to subspeciality $\specialityidx \in \specialityset$; 0 otherwise  \\
        $\integervar$                                      & Number of transplants to be performed in block $\blockidx \in \blockset$ by subspeciality $\specialityidx \in \specialitytxset$ \\
        $\anestslackvar$                                   & Slack variable for the number of anaesthetists available in block $\blockidx \in \blockset$ \\
        $\teamslackvar$                                    & Slack variable for the number of teams available in block $\blockidx \in \blockset$ for subspeciality $\specialityidx \in \specialityset$ \\
        \bottomrule
    \end{tabular}
\end{table}

The Objective Functions (OFs) are organised according to the priority levels established by the guidelines from the case study --- although they can be reordered as needed to reflect different institutional priorities in other settings.
To reflect these priorities, we adopt a hierarchical optimisation strategy~\citep[e.g.,][]{Aringhieri2022Hierarchical}.
The OFs are optimised sequentially, while respecting the optimal values obtained in the previous stages.

To identify the minimum human resource-availability violation required to obtain a feasible allocation, the first OF minimises the total violation of team and anaesthetist availability. 
Since anaesthetists are a scarcer and more expensive resource in our case study, anaesthetist slack is assigned a higher penalty than team slack. 
For this purpose, we denote $\anestconstant$ as the penalty coefficient associated with anaesthetist slack, defined as $\anestconstant = |\blockset|\cdot|\specialityset|\cdot|\roomset|$. 
This value is chosen to be sufficiently large to ensure that, whenever a solution with zero anaesthetist slack exists, any solution with positive anaesthetist slack is dominated. 
The first OF is therefore defined as
\begin{align}
    \text{Min } \displaystyle F_0 = \sum_{\blockidx \in \blockset} \sum_{\specialityidx \in \specialityset} \teamslackvar + \anestconstant \cdot \anestslackvar. \label{eq:f0}
\end{align}
The optimal value of $F_0$ is then fixed in the subsequent stages of the hierarchical procedure.

In the second OF, we seek to minimise the number of allocations outside the aspirational theatre allocation for each subspeciality, thus promoting the use of preferred theatres.
We denote $\allowedrooms_{\roomidx\blockidx\specialityidx} \in \{0,1\}$ as a parameter that is equal to 1 if theatre $\roomidx \in \roomset$, in block $\blockidx \in \blockset$, is preferred to be allocated to subspeciality $\specialityidx \in \specialityset$ in the block $\blockidx \in \blockset$, and 0 otherwise.
Then, the second OF is defined as
\begin{align}
    \text{Min } \displaystyle F_1 = \sum_{\roomidx \in \roomset} \sum_{\blockidx \in \blockset} \sum_{\specialityidx \in \specialityset} \blockvar  \text{ only for variables where } \allowedrooms_{\roomidx\blockidx\specialityidx} = 0. \label{eq:f1}
\end{align}

The subsequent OF maximises the allocation of priority subspecialities, ensuring that they receive as much block allocation as possible within the constraints of the system.
Therefore, the third OF is defined as
\begin{align}
    \text{Max } \displaystyle F_2 = \sum_{\roomidx \in \roomset} \sum_{\blockidx \in \blockset} \sum_{\specialityidx \in \specialityprset} \blockvar. \label{eq:f2}
\end{align}

In the fourth OF we maximise the total number of allocated blocks, thus promoting overall OT utilisation and service provision.
We define this OF as
\begin{align}
\text{Max } \displaystyle F_3 = \sum_{\roomidx \in \roomset} \sum_{\blockidx \in \blockset} \sum_{\specialityidx \in \specialityset} \blockvar. \label{eq:f3}
\end{align}

Finally, the fifth OF promotes the allocation of sequential blocks to the same subspeciality, which can improve the logistics of the surgical process and reduce the need for equipment and team transfers between theatres.
Additionaly, this objective is also important for the case where surgeries from the subspeciality allocated to the morning block are delayed, as the afternoon block can be used to continue operating on patients from the same subspeciality, potentially improving efficiency compared with switching to a different subspeciality.
This last OF is defined as
\begin{align}
    \text{Max } \displaystyle F_4 = \sum_{\roomidx \in \roomset} \sum_{\planninghorizonidx \in \planninghorizonset} \sum_{\specialityidx \in \specialityset} \precvar. \label{eq:f4}
\end{align}

Based on this OF structure, we define the \modelname~(Long-Term OT Scheduling) model.
It incorporates $F_0$ to $F_4$ and associates them with constraints from a real-world scenario, but with the flexibility to accommodate other settings as needed.

\begin{alignat}{4}
    (\modelname) \quad & \omit\rlap{Optimize $F_0, F_1, F_2, F_3, F_4$}\nonumber\\
    & \mbox{subject to} && \quad & \sum_{\roomidx \in \roomset} \sum_{\blockidx \in \blockset} \blockvar  & \leq \upperblockdemand                    &\qquad & \forall\, \specialityidx \in \specialityset  \label{eq:c1}\\
    &                  && \quad & \sum_{\roomidx \in \roomset} \sum_{\blockidx \in \blockset} \blockvar  & \geq \lowerblockdemand   &\qquad & \forall\, \specialityidx \in \specialityset \label{eq:c2}\\
    &                  && \quad & \sum_{\specialityidx \in \specialityset} \blockvar                       & \leq 1     &\qquad & \forall\, \roomidx \in \roomset,\, \blockidx \in \blockset \label{eq:c3}\\
    &                  && \quad & \sum_{\roomidx \in \roomset} \sum_{\specialityidx \in \specialityset} \needanest_{\specialityidx} \cdot \blockvar   & \leq \numbersanests_{\blockidx} + \anestslackvar &\qquad & \forall\, \blockidx \in \blockset \label{eq:c4}\\
    &                  && \quad & \sum_{\roomidx \in \roomset} \blockvar   & \leq \specteams_{\blockidx,\specialityidx} + \teamslackvar  &\qquad & \forall\, \,\blockidx \in \blockset,\,\specialityidx \in \specialityset \label{eq:c5}\\
    &                  && \quad & \sum_{\roomidx \in \roomset} \blockvar   & = 2 \cdot \integervar  & \qquad & \forall\, \blockidx \in \blockset,\,\specialityidx \in \specialitytxset  \label{eq:c9}\\
    &                  && \quad & x_{\roomidx,\blockidx_1(\planninghorizonidx),\specialityidx} + x_{\roomidx,\blockidx_2(\planninghorizonidx),\specialityidx} - 1   & \leq \precvar   &\qquad & \forall\, \roomidx \in \roomset,\,\planninghorizonidx \in \planninghorizonset,\, \blockidx \in \blockset,\,\specialityidx \in \specialityset \label{eq:c6}\\
    &                  && \quad & x_{\roomidx,\blockidx_1(\planninghorizonidx),\specialityidx} + x_{\roomidx,\blockidx_2(\planninghorizonidx),\specialityidx}       & \geq 2 \cdot \precvar   &\qquad & \forall\, \roomidx \in \roomset,\,\planninghorizonidx \in \planninghorizonset,\, \blockidx \in \blockset,\,\specialityidx \in \specialityset,\label{eq:c7}\\
    &                  && \quad & x_{\roomidx,\blockidx_1(\planninghorizonidx),\specialityidx}       & = x_{\roomidx,\blockidx_2(\planninghorizonidx),\specialityidx}   &\qquad & \forall\, \roomidx \in \roomset,\,\planninghorizonidx \in \planninghorizonset,\,\blockidx \in \blockmset,\,\specialityidx \in \specialitymset\label{eq:c8}\\
    &                  &&       & \blockvar                           & \in \{0,1\}                 &        & \forall\, \roomidx \in \roomset,\, \blockidx \in \blockset,\, \specialityidx \in \specialityset\nonumber \\
    &                  &&       & \precvar                            & \in \{0,1\}                 &        & \forall\, \roomidx \in \roomset,\, \planninghorizonidx \in \planninghorizonset,\, \specialityidx \in \specialityset\nonumber \\
    &                  &&       & \integervar                         & \in \mathbb{Z_{+}}          &        & \forall\, \blockidx \in \blockset,\, \specialityidx \in \specialityset\nonumber\\
    &                  &&       & \anestslackvar                      & \in \mathbb{Z_{+}}          &        & \forall \, \blockidx \in \blockset\nonumber\\
    &                  &&       & \teamslackvar                       & \in \mathbb{Z_{+}}.          &        & \forall \, \blockidx \in \blockset,\, \specialityidx \in \specialityset \nonumber
\end{alignat}

\bigskip

Constraints~\eqref{eq:c1} and~\eqref{eq:c2} impose lower and upper bounds on the number of allocations of each subspeciality.
These constraints link the tactical scheduling model to the strategic layer, as these bounds are preprocessed information (see Section~\ref{subsec:preprocess_otsp}) coming from the $(R,Q)$ policy and the cancellation-risk management. 

Constraint~\eqref{eq:c3} ensures that each theatre and block receives at most one subspeciality.
Constraint~\eqref{eq:c4} is a soft constraint that limits the number of allocations that require anaesthetists ($\needanest_{\blockidx} = 1$) according to their availability in each block ($\numbersanests_{\blockidx}$), while allowing for some flexibility through the slack variable $\anestslackvar$.
Similarly, constraint~\eqref{eq:c5} limits the number of allocations for each subspeciality based on the availability of medical teams defined by $\specteams_{\blockidx\specialityidx}$, with the slack variable $\teamslackvar$ providing flexibility when necessary.

Constraint~\eqref{eq:c9} ensures that transplant subspecialities are allocated in pairs, reflecting the need for simultaneous OT use for donor and recipient procedures.
Constraints~\eqref{eq:c6} and~\eqref{eq:c7} define $\precvar$ as an indicator of whether theatre $\roomidx$ is allocated to subspeciality $\specialityidx$ in both sequential blocks $\blockidx_1(\planninghorizonidx)$ and $\blockidx_2(\planninghorizonidx)$ of day $\planninghorizonidx$.
This allocation pattern is then incentivised by OF $F_4$.
Constraint~\eqref{eq:c8} enforces this same sequential allocation for subspecialities that necessarily require two consecutive blocks.

No explicit constraints are required to model theatre-subspeciality compatibility. 
Allocation variables $\blockvar$ are introduced only for compatible pairs $(\roomidx,\specialityidx)$, where theatre $\roomidx \in \roomset$ provides the infrastructure required by subspeciality $\specialityidx \in \specialityset$. 
Hence, infeasible assignments are excluded a priori from the feasible region, and all feasible solutions automatically satisfy compatibility requirements.
\section{Computational Study\label{sec:experiments}}
We conduct a computational study to evaluate the performance of the proposed approach using real-world data from \hcunicamp. 
While the theoretical results in this paper provide guarantees on the framework, it is essential to assess the practical implications in a realistic setting.
To this end, we simulate the long-term surgery planning process and compare the proposed approach with a baseline policy in terms of patient waiting-lists and resource utilisation.

\subsection{Real-world Data\label{instances}}
The dataset comprises detailed information from HC-Unicamp on surgery durations, OT availability, medical team constraints, and subspeciality schedules over a one-year period: from January to December 2025.

\subsection{Derivation of Model Inputs}
The data was processed to derive demand distributions, surgery duration distributions, and tactical parameters.

\paragraph{\textbf{Demand distributions}}
Patient arrivals are modelled as Poisson processes. 
For each subspeciality, the arrival rate $\lambda$ is estimated as the average monthly demand observed in the historical activity data. 
The resulting demand distributions are then used as inputs to the \rqmodule~module to determine the $(R,Q)$ policy parameters. 
They are also used to build an initial queue for each subspeciality by sampling six monthly demand values, thereby simulating six months of patient arrivals before the first scheduling decision is made.

\paragraph{\textbf{Surgery duration distributions}}
For each subspeciality, empirical distributions of surgery durations are estimated from historical data. 
These distributions are used in the \overtimemodule~module to evaluate the newsvendor model and determine the number of surgeries to be scheduled per block; see Section \ref{sec:module2}.

\paragraph{\textbf{Tactical parameters}}
All parameters of the \modelname~model (see Table~\ref{tab:symb} in Section~\ref{sec:ot}) are based on real-world data, i.e., the actual values observed in 2025. These include theatre compatibilities, staff availability, and working calendars.

\subsection{Experimental Setup\label{sec:experimental_setup}}
Experiments were conducted in an Ubuntu 24.04 machine equipped with an Intel Core 7 150U processor (5.40 GHz) and 32 GB of RAM. 
All modules are implemented in Python 3.12, and the MILP model is solved using SCIP 9.2.3 with default parameter settings, interfaced through the Pyomo library.
The source code is publicly available at \url{https://github.com/jpfsilvaa/long_term_management_}. The real-world data used in this study are subject to non-disclosure agreements and therefore cannot be shared publicly or included in the code repository.

For the \rqmodule~module, the waiting time target is set to $\waittimereq = 6$ months with confidence level $\conflevelrq = 0.1$, and the threshold $R$ is defined as the midpoint between $\lowq$ and $\upq$.
In other words, we design the $(R,Q)$ policy to ensure that, with 90\% confidence (i.e., $1 - \conflevelrq$), patients are treated within 6 months from the time they join the queue, while the threshold $R$ is set to the average of the lower and upper bounds of the control interval, providing a balanced point for triggering scheduling decisions.

The parameters for the \overtimemodule~module are set as follows. 
The unit idle-time cost, $\excesscost$, is set to $1$, whereas the overtime penalty cost, $p$, is set to $2 \cdot \excesscost$.
Under this specification, and analogously to the example at the end of Section~\ref{sec:cancel}, the optimal time budget is given by $\safetythreshold = F^{-1}_{\rvsurgdur} \left( \frac{2}{3} \right)$, which implies that at least two thirds of the green-lit surgeries are expected to be completed without incurring overtime.
In the inventory model step of this module, the marginal gain of scheduling an additional surgery in a block, denoted by $g$, is set to $2$.
The cancellation cost function is specified as $\cancellationfunction(x) = \monetarycancconst \cdot x^2$, where $x$ denotes the number of surgeries cancelled in a block, and $\monetarycancconst$ is a monetary constant, set to $1$ in this study.
This quadratic cost function reflects the increasing marginal cost of cancellations, capturing the operational and patient-related consequences of cancelling multiple surgeries in the same block.


\subsection{Simulation Framework}

Figure \ref{fig:longterm_simulation} illustrates the simulation flow for the numerical example. The system evolves over a monthly decision cycle. In each period, patient arrivals are realised according to the estimated demand distributions, and queues are updated accordingly. Only subspecialities with queue size exceeding $R$ are selected for scheduling.

The MILP model then determines the allocation of blocks. The number of surgeries performed in each scheduled block is sampled from the probability distribution obtained from the \overtimemodule~module - Eq. \eqref{eq:distnew}. 
%
Queues are updated after the monthly cycle, and the process is repeated at each new period.

\begin{figure}[htb]
    \centering
    \includegraphics[width=0.99\textwidth]{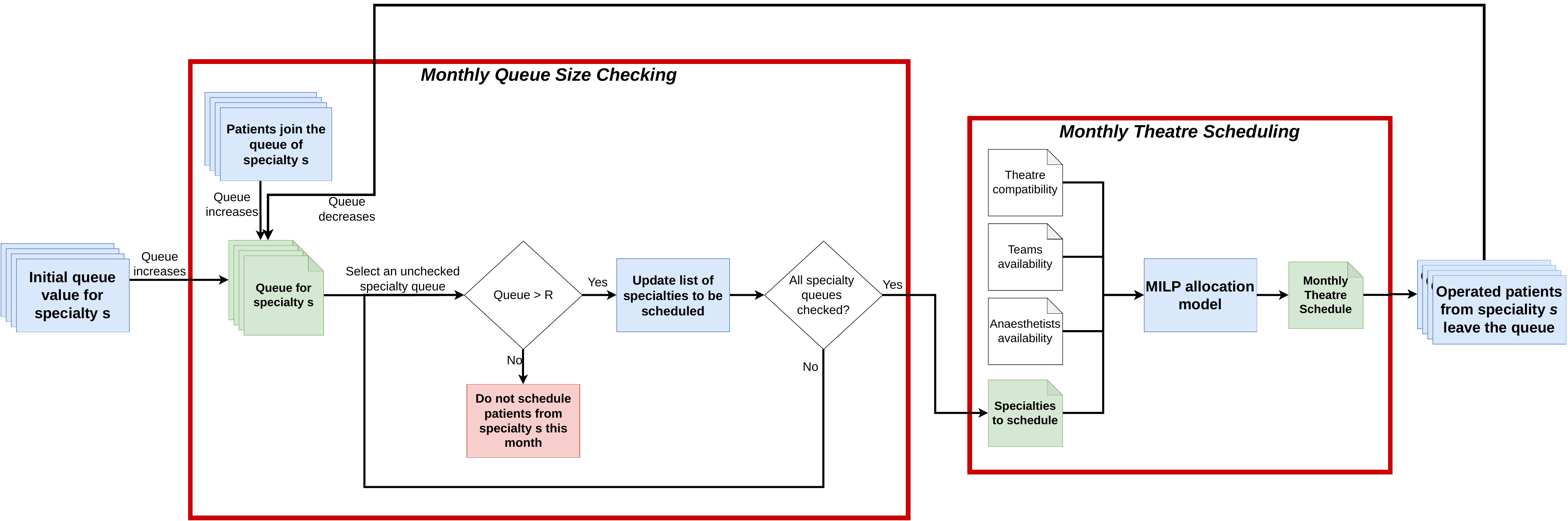}
    \caption{Framework simulation flow. The proposed approach follows a monthly decision cycle in which scheduling decisions are guided by strategic modules designed to control the long-term evolution of waiting lists and the risk of cancellations.}
    \label{fig:longterm_simulation}
\end{figure}

The baseline simulation flow, illustrated in Figure~\ref{fig:baseline_experiment}, follows a similar structure, but without the strategic modules. 
In this case, the policy allocates the subspecialities with no strategic waiting time oversight, using the same MILP model but without the constraints derived from the strategic modules (Eq.~\eqref{eq:c1} and Eq.~\eqref{eq:c2}).
For each scheduled block, the number of patients operated on is sampled from the distribution obtained by recursively convolving the original surgery duration distributions up to $n$ times, where $n$ is the maximum number of surgeries that can be completed within a single block at the chosen confidence level. 
Overtime and surgery cancellations are not considered when constructing this distribution.
This allows us to isolate the impact of the strategic modules on long-term system performance, while keeping hospital parameters and patient arrival process consistent across both approaches.
\begin{figure}[htb]
    \centering
    \includegraphics[width=0.8\textwidth]{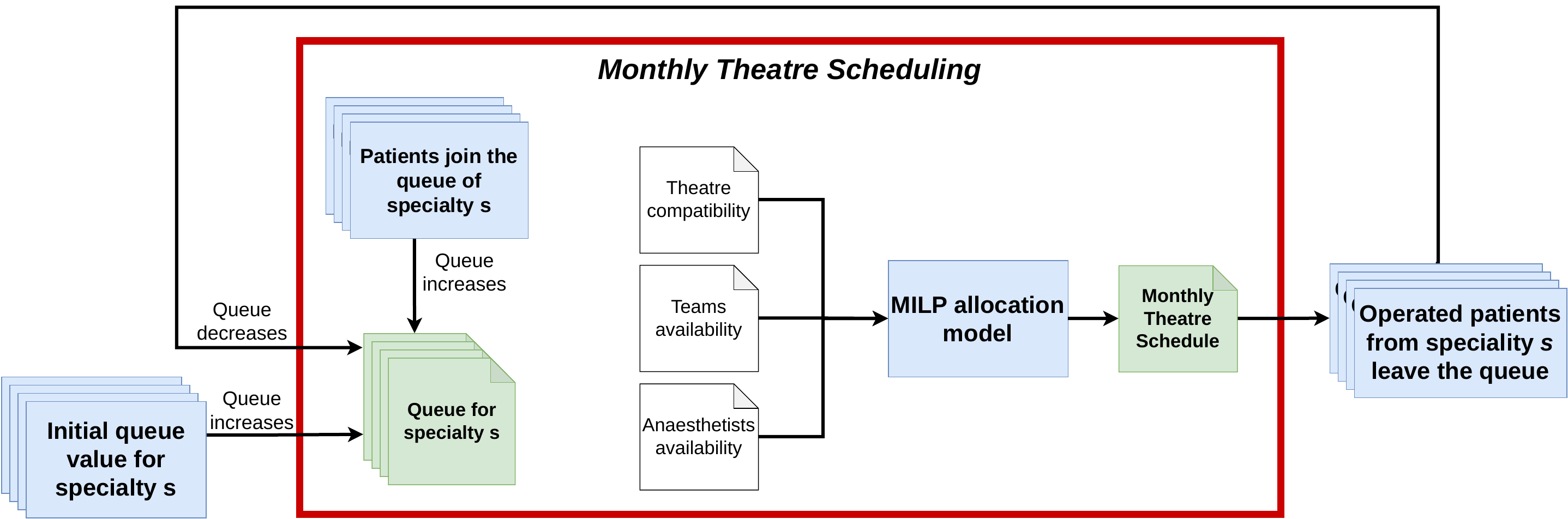}
    \caption{Baseline policy simulation flow. The baseline policy follows a simulation path similar to that of the proposed approach but excludes the strategic modules; queues enter the simulation only through patient arrivals and dispatching decisions, rather than informing the planning decisions.}
    \label{fig:baseline_experiment}
\end{figure}

To assess both policies, we perform $100$ independent simulation runs each, using different random seeds to capture the stochasticity in surgery durations and patient arrivals.
This way, we can evaluate not only the expected performance, but also the variability and robustness of the proposed approach compared to the baseline policy.

\subsection{Results from Strategic Decision-Making\label{sec:longterm_analysis}}
This section reports and analyses the results obtained from the strategic components of the proposed framework, namely the \rqmodule~and the \overtimemodule~modules. 
Although these two components are solved independently, their outputs are combined to support tactical decisions in the OT Scheduling module. 
This decomposition has an important computational implication: the strategic modules are computed once, and their outputs are then used as fixed inputs by the scheduling module, which is executed at each monthly cycle of the simulation horizon.

We first discuss the results of the \overtimemodule~module, which estimates the number of surgeries that can be planned in each block while controlling the risk of cancellation. 
We then present the results of the $(R,Q)$ policy, which defines the control intervals for each subspeciality. 
These intervals are reported after conversion into block units (see Section~\ref{subsec:preprocess_otsp}), using the probability distribution of the optimal number of surgeries planned per block obtained from the \overtimemodule~module. 
This conversion allows the control intervals to be compared across subspecialities in a common unit that is directly relevant to the scheduling module. 
In practice, the number of patients that can be treated in a block depends on both the subspeciality-specific distribution of surgical durations and the cancellation rules defined by the \overtimemodule~module.

\subsubsection{Overtime and Cancellation Results\label{sec:overtime_results}}
This section reports the results of the \overtimemodule~module for Foot Orthopaedics, which is used as a representative subspeciality, although we perform the same analysis for all subspecialities. 
Figure~\ref{fig:newsvendor_pmf} shows the probability distribution of the number of surgeries completed in a Foot Orthopaedics block for different planned values of $\plannedsurgs$. 
We obtain the distributions after applying the newsvendor-based cancellation rule defined in Section~\ref{sec:cancel}, according to which an additional surgery is performed only if the remaining session time is sufficiently large.

\begin{figure}[htb]
    \centering
    \includegraphics[width=0.6\textwidth]{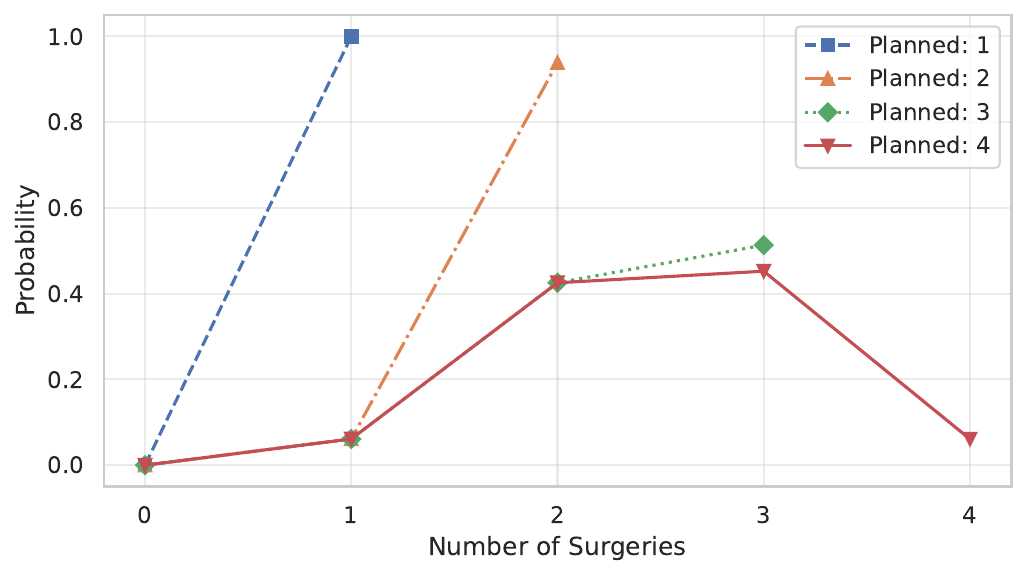}
    \caption{Probability distribution of completed surgeries in Foot Orthopaedics blocks for different planned values of $\plannedsurgs$.}
    \label{fig:newsvendor_pmf}
\end{figure}

The results reveal a clear trade-off between block productivity and cancellation risk. 
When only one surgery is planned in a session ($\plannedsurgs=1$), the surgery is always completed, and the cancellation probability is zero. 
Although this planning decision is operationally safe, it underuses the available block capacity for this subspeciality. 
When two surgeries are planned ($\plannedsurgs=2$), the probability mass remains highly concentrated around two completed surgeries, indicating that the block can usually accommodate this workload without substantial cancellation risk. 
When three surgeries are planned ($\plannedsurgs=3$), the distribution becomes more dispersed: completing all three surgeries remains feasible, but the probability of completing only two surgeries becomes comparatively relevant. 
This indicates that the third planned surgery increases expected throughput while introducing a non-negligible cancellation risk. 
Finally, when four surgeries are planned ($\plannedsurgs=4$), the probability of completing all planned surgeries becomes very low, and the realised output is more likely to remain below the nominal plan.

These findings highlight that a theatre block should not be treated as a deterministic capacity unit capable of accommodating a fixed number of surgeries with certainty. 
Instead, increasing the number of planned surgeries changes the probability distribution of the realised surgical output. 
Consequently, using only nominal capacity would overestimate the effective contribution of a block, particularly when many surgeries are assigned to the same session. 
This insight is central to the proposed framework: while the $(R,Q)$ policy may indicate that a certain number of patients should be released for treatment during a monthly cycle, the Overtime and Cancellation module determines how much of this demand can realistically be assigned to each block without creating excessive cancellation risk.

Figure~\ref{fig:gain_cost} complements this analysis by comparing the total gain and the expected cancellation cost associated with each value of $\plannedsurgs$ for the same subspeciality.
While Figure~\ref{fig:newsvendor_pmf} shows the probability distributions of the possible numbers of surgeries to plan, Figure~\ref{fig:gain_cost} translates these values into a decision criterion for selecting the optimal number of surgeries per block, as defined by the inventory-based model in Section~\ref{sec:patientsperblock}.

\begin{figure}[htb]
    \centering
    \includegraphics[width=0.6\textwidth]{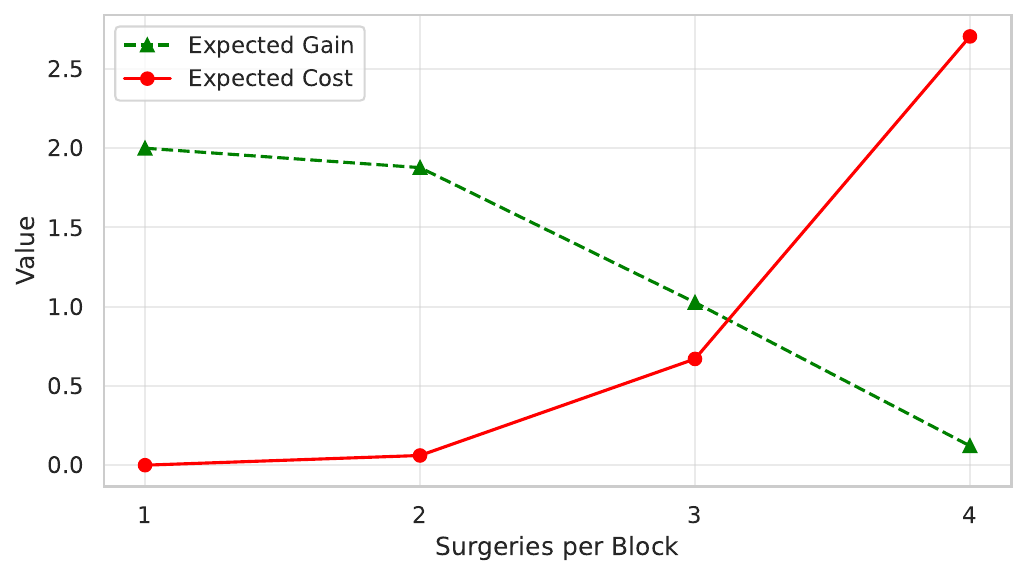}
    \caption{Expected total gain and expected cancellation cost associated with planning $\plannedsurgs$ surgeries in a Foot Orthopaedics block.}
    \label{fig:gain_cost}
\end{figure}

The gain--cost comparison confirms that planning too few surgeries underuses available OT capacity, whereas planning too many surgeries produces an unfavourable cancellation profile. 
For $\plannedsurgs=1$ and $\plannedsurgs=2$, the gain is substantially higher than the expected cost, indicating that these planning levels are operationally conservative. 
However, they also leave potential throughput unused, particularly when compared with the additional output that can be obtained by planning a third surgery. 
For $\plannedsurgs=3$, the expected cancellation cost increases, but remains below the gain. 
This suggests that the third planned surgery is still justified, as it improves expected block productivity without making cancellation risk dominant.

The main change occurs when moving from $\plannedsurgs=3$ to $\plannedsurgs=4$. 
At this point, the gain drops sharply, while the expected cancellation cost increases substantially. 
This pattern indicates that the fourth planned surgery contributes little to the expected number of completed surgeries and mainly increases the likelihood of cancellations. 
Therefore, for Foot Orthopaedics, the experimental results support $\optplannedsurgs=3$ as the recommended number of surgeries to plan per block. 

The \overtimemodule~results show that the effective capacity of each block is subspeciality-specific and depends on the empirical distribution of surgical durations. 
For Foot Orthopaedics, three planned surgeries per block provide the best balance between productivity and cancellation risk. 
This output is then combined with the $(R,Q)$ limits to determine how many blocks should be allocated to this subspeciality in each monthly cycle to ensure adherence with the long-term waiting time requirements - Section \ref{subsec:preprocess_otsp}. 
Therefore, the MILP does not allocate blocks on the basis of nominal session availability. 
Instead, it uses a well-structured probabilistic estimate of how many surgeries each allocated block can realistically deliver while controlling overtime-related cancellations.

\subsubsection{$(R,Q)$ Batch Size Results}

This section reports the control intervals obtained from the $(R,Q)$ policy for all subspecialities of the case study.
Figure~\ref{fig:rq_intervals} summarises these intervals after conversion into block units (see Section~\ref{subsec:preprocess_otsp}). 
To improve readability, subspecialities are ordered according to interval width, $\upperblockdemand - \lowerblockdemand$, rather than alphabetically.

\begin{figure}[htb]
    \centering
    \includegraphics[width=0.9\textwidth]{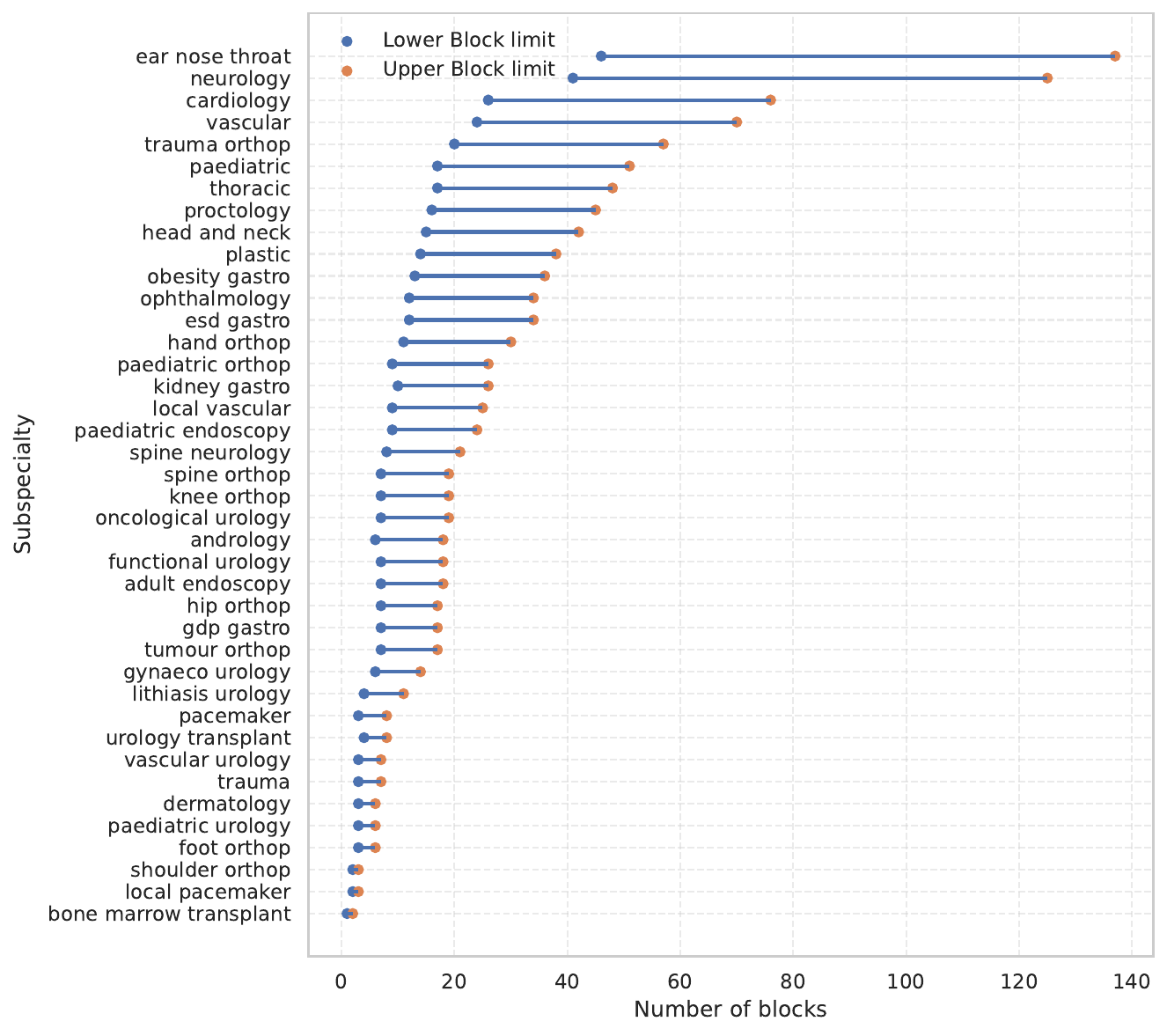}
    \caption{Control intervals for the $(R,Q)$ policy across subspecialities. The intervals represent the admissible range of blocks that can be allocated to each subspeciality when its queue exceeds the reorder point $R$.}
    \label{fig:rq_intervals}
\end{figure}

The results reveal substantial heterogeneity across subspecialities. 
Although most services are associated with relatively narrow feasible intervals, a smaller group exhibits considerably higher lower and upper limits. 
This pattern is consistent with variation in patient demand distributions. 
Subspecialities with higher expected demand and high surgical time variability tend to produce higher values of both $\lowerblockdemand$ and $\upperblockdemand$, whereas lower variability tends to generate shorter intervals.

High-interval subspecialities, such as ENT (Ear, Nose and Throat), NEURO (Neurosurgery), and VASC (Vascular Surgery), provide greater flexibility because the number of blocks, and consequently the number of patients released for scheduling, can vary substantially from one planning cycle to another without compromising waiting-time requirements. 
From a decision-making perspective, this flexibility is particularly important for broad-access or high-referral subspecialities, which may accumulate patients rapidly when surgical capacity is constrained. 
At the same time, these subspecialities may release significant block capacity to other services when their own queues are under control.

By contrast, low-interval subspecialities operate within tighter allocation ranges. 
This does not imply lower clinical importance. 
Rather, it indicates that their demand is smaller and more stable from the perspective of queue replenishment. 
Subspecialities in the medium group occupy an intermediate position: their demand is recurrent enough to require some flexibility, but not sufficiently large or variable to dominate the release process.

These results have direct implications for capacity management planning. 
Because the $(R,Q)$ limits determine the number of patients transferred from the queue-management layer to the tactical planning stage, subspecialities with wider and higher intervals are more likely to influence block allocation. 
This does not necessarily mean that these services should always receive more operating-theatre time. 
Instead, it indicates that their larger and more variable demand must be explicitly represented in the tactical optimisation problem. 
Conversely, subspecialities with narrow intervals generate smaller and more predictable demand releases, reducing their impact on capacity allocation decisions.

It is also important to recall that the $(R,Q)$ policy triggers allocation decisions for a subspeciality only when its queue size exceeds the threshold $R$. 
Thus, even subspecialities with wider intervals will not necessarily dominate the block allocation process, because they are considered for scheduling only when their queues are sufficiently large. 
This feedback mechanism ensures that the system remains responsive to actual demand conditions, rather than being driven solely by potential variability in patient arrivals.

In summary, the control intervals provide a quantitative link between queue dynamics and tactical planning. 
They translate subspeciality-specific demand characteristics into structured release limits, ensuring that the scheduling module receives adaptive demand inputs that are consistent with both the statistical behaviour of the queues and the institutional characteristics of a high-demand public hospital such as \hcunicamp.

\subsection{Long-term Stability Analysis\label{sec:otscheduling}}

This section evaluates the long-term stability of the system under the proposed framework and compares it with the baseline policy. 
The analysis focuses on two complementary dimensions: the evolution of patient queues over time and the monthly allocation of operating-theatre capacity. 

\subsubsection{Queue Management Results}

The one-year simulation horizon provides a detailed view of queue dynamics under both policies. 
Results are reported using the 10--90\% percentile interval to characterise variability across simulation replications.

Figure~\ref{fig:queue_evolution_eda} presents the queue evolution for \textit{Paediatric Endoscopy}. 
Under the proposed framework (Figure~\ref{fig:eda_proposed}), the system exhibits two distinct phases. 
The first is a transient phase, during which the queue decreases steadily from its initial level. 
This is followed by a controlled regime, in which the queue starts to stabilise, approximately between months 6 and 9, and then fluctuates around the reorder threshold $R$. 
This transition reflects the action of the $(R,Q)$ policy, which introduces a demand-driven feedback mechanism: allocation decisions are triggered only when the queue exceeds $R$, thereby preventing unnecessary service once the waiting list has been reduced to a desired point. 
As a result, the system converges towards a stable operating region that balances waiting-time requirements with the hospital capacity constraints.

\begin{figure}[htb]
    \centering
    \begin{subfigure}[b]{0.49\textwidth}
        \includegraphics[width=\textwidth]{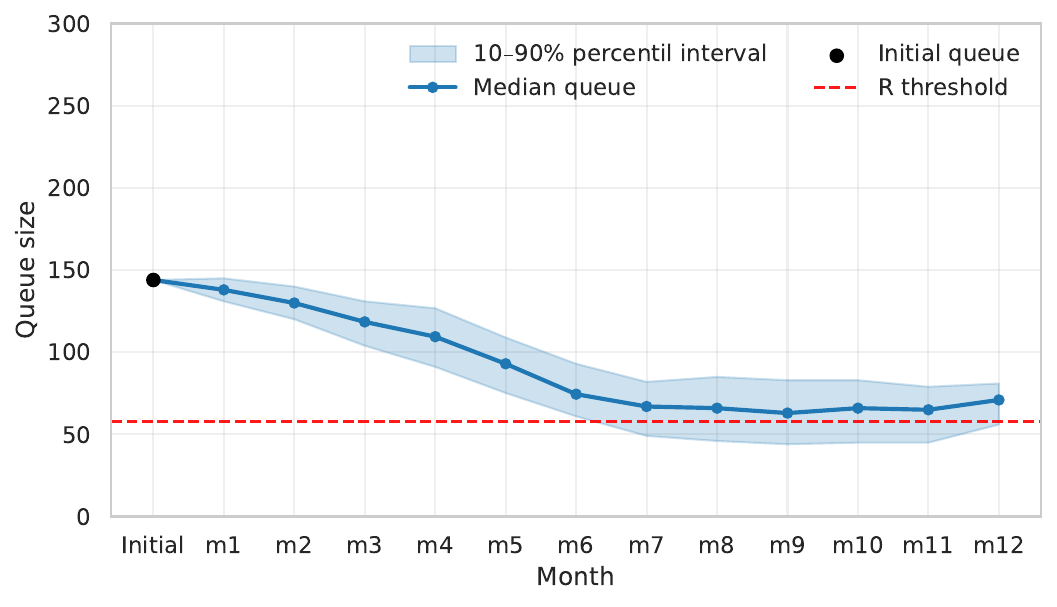}
        \caption{Queue evolution under the proposed framework.}
        \label{fig:eda_proposed}
    \end{subfigure}
    \hfill
    \begin{subfigure}[b]{0.49\textwidth}
        \includegraphics[width=\textwidth]{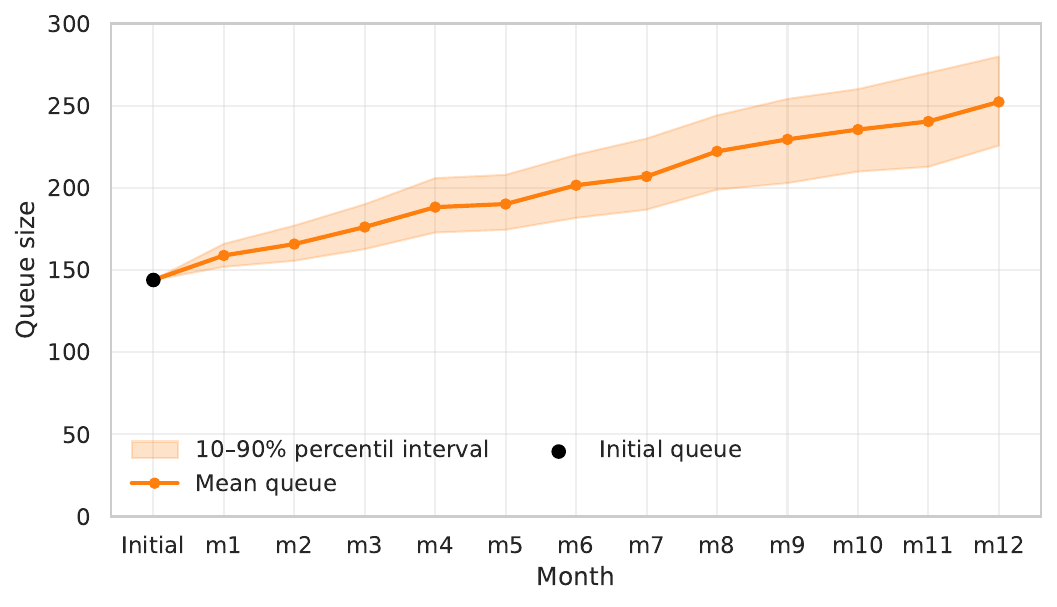}
        \caption{Queue evolution under the baseline policy.}
        \label{fig:eda_baseline}
    \end{subfigure}
    \caption{Evolution of patient queues over time for the subspeciality \textit{Paediatric Endoscopy}.}
    \label{fig:queue_evolution_eda}
\end{figure}

In contrast, the baseline policy leads to a markedly different trajectory (Figure~\ref{fig:eda_baseline}). 
The queue grows throughout the simulation horizon and is accompanied by increasing dispersion across replications. 
This pattern suggests a persistent mismatch between demand and allocated capacity for this subspeciality. 
As the baseline allocation rule does not explicitly account for the current queue state, it may fail to prioritise subspecialities with growing queues, leading to a feedback loop in which demand continues to accumulate without sufficient service.

\begin{figure}[htb]
    \centering
    \begin{subfigure}[b]{0.49\textwidth}
        \includegraphics[width=\textwidth]{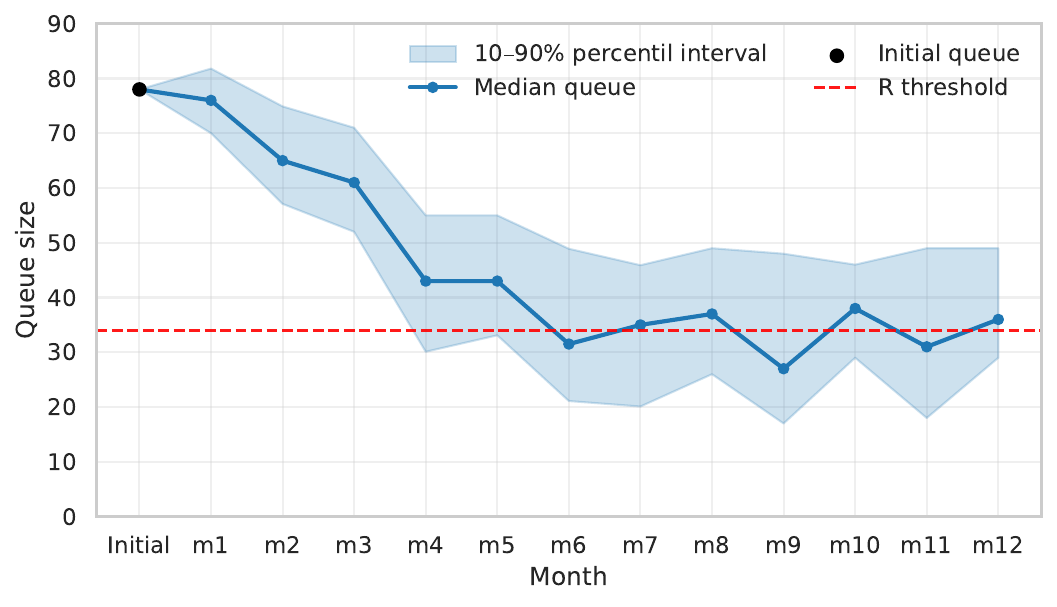}
        \caption{Queue evolution under the proposed framework.}
        \label{fig:gast_proposed}
    \end{subfigure}
    \hfill
    \begin{subfigure}[b]{0.49\textwidth}
        \includegraphics[width=\textwidth]{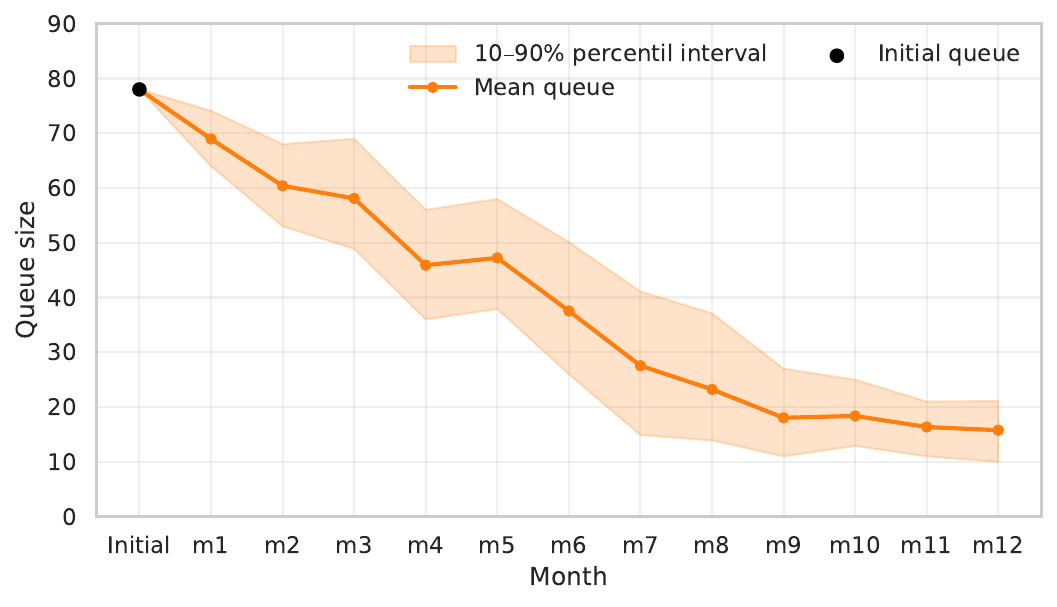}
        \caption{Queue evolution under the baseline policy.}
        \label{fig:gast_baseline}
    \end{subfigure}
    \caption{Evolution of patient queues over time for the subspeciality \textit{Gastroenterology -- Esophagus-Stomach-Duodenum (ESD)}.}
    \label{fig:queue_evolution_gast_eed}
\end{figure}

A similar stability pattern is observed for \textit{Gastroenterology -- Esophagus-Stomach-Duodenum (ESD)} in Figure~\ref{fig:queue_evolution_gast_eed}. 
Under the proposed framework (Figure~\ref{fig:gast_proposed}), the initial waiting list is reduced rapidly, and the queue subsequently remains close to the reorder threshold. 
This indicates that the $(R,Q)$ mechanism adapts allocation decisions to the evolving state of each subspeciality while maintaining a consistent target operating region. 
Importantly, the queue does not continue decreasing indefinitely. 
Instead, it stabilises around $R$, which is the intended behaviour of a reorder-point policy designed to regulate, rather than eliminate, queues.

For the baseline policy (Figure~\ref{fig:gast_baseline}), the queue also decreases over time. 
However, this local improvement should not be interpreted as evidence of system-wide stability. 
As the baseline does not prioritise subspecialities according to queue status, it may allocate capacity to services whose queues are already manageable, while other subspecialities remain congested. 
Thus, the baseline can produce favourable trajectories for selected queues while still generating inefficient allocation patterns at the system level. 
This distinction is important: analysing a single subspeciality in isolation may obscure whether a policy is effectively coordinating capacity across the full portfolio of surgical queues.

Figure~\ref{fig:queue_convergence_r} complements the subspeciality-level analysis by reporting the median queue trajectories for all subspecialities, normalised by their corresponding reorder threshold $R$, i.e., $\frac{\mathit{Queue}(m)}{R} \cdot 100$, where $\mathit{Queue}(m)$ is the queue length in monthly cycle $m$. 
Although the baseline policy does not use the $(R,Q)$ policy, normalising queues by $R$ provides a common reference point for comparing the two approaches across subspecialities.
Also, this normalisation enables comparisons across subspecialities with different demand profiles and queue scales.

Under the proposed framework (Figure~\ref{fig:proposed_convergence}), queues progressively concentrate around the 100\% reference line, indicating convergence towards the intended controlled region.
This system-wide pattern shows that the proposed framework regulates queues consistently across all subspecialities. 
Specifically, queues neither grow indefinitely nor remain persistently above their reorder thresholds. 
This indicates that an arriving patient has a high probability of being treated within the target waiting time, as the policy prevents excessive patient accumulation.

\begin{figure}[htb]
    \centering
    \begin{subfigure}[b]{0.49\textwidth}
        \includegraphics[width=\textwidth]{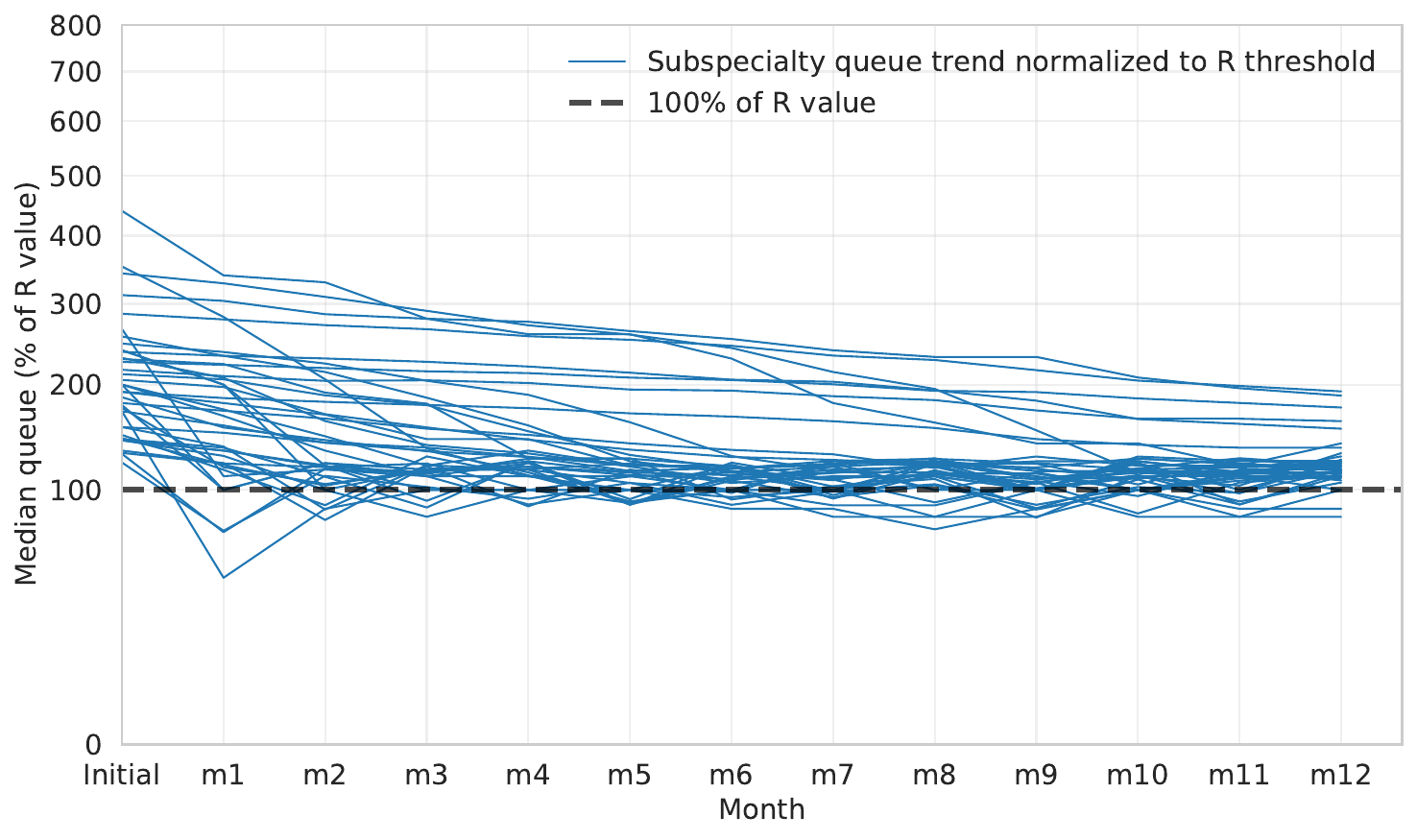}
        \caption{Proposed framework.}
        \label{fig:proposed_convergence}
    \end{subfigure}
    \hfill
    \begin{subfigure}[b]{0.49\textwidth}
        \includegraphics[width=\textwidth]{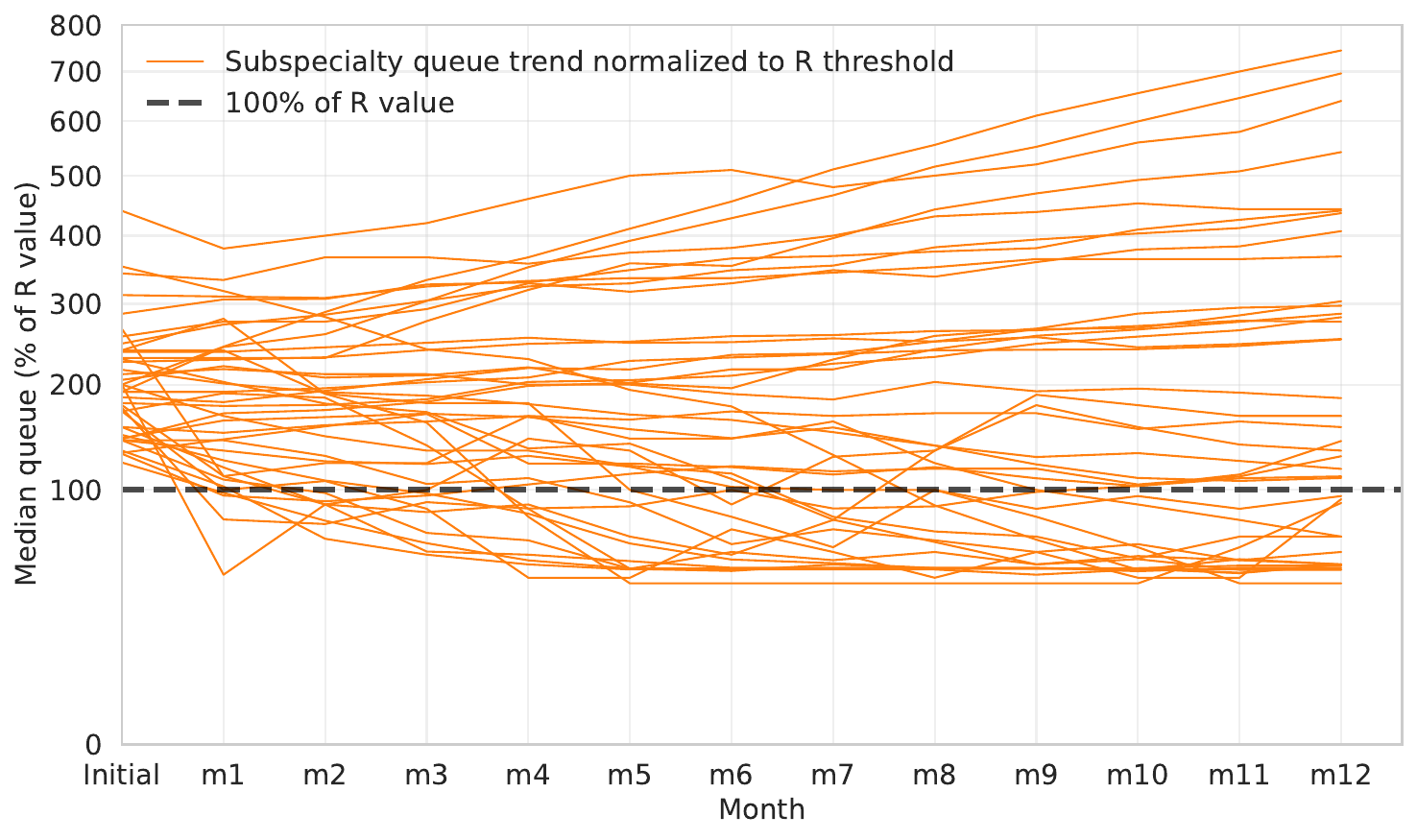}
        \caption{Baseline policy.}
        \label{fig:baseline_convergence}
    \end{subfigure}
    \caption{Median queue trajectories for all subspecialities over the simulation horizon, normalised by the corresponding reorder threshold $R$. The dashed line represents 100\% of the $R$ threshold.}
    \label{fig:queue_convergence_r}
\end{figure}

In contrast, the baseline policy (Figure~\ref{fig:baseline_convergence}) exhibits a more heterogeneous and unstable pattern. 
Some subspecialities move towards lower queue levels, whereas several others experience sustained growth and remain substantially above their reorder thresholds. 
This dispersion reveals the absence of a stabilising mechanism. 
The baseline may perform adequately for selected subspecialities, but it does not ensure consistent performance across the full portfolio of surgical queues. 
More importantly, it may fail to meet waiting-time requirements for many subspecialities, because queues are not explicitly controlled.

In practice, persistent deviations of the kind of the baseline policy would require managerial intervention, either by reducing the intake of patients in the most critical subspecialities, by redefining priority levels to reallocate capacity towards them, or by combining both actions.
Thus, rather than providing a self-regulating planning mechanism, this policy may transfer the burden of restoring queue stability to reactive ad hoc adjustments.

By driving queues towards a controlled operating region, limiting excessive waiting-list accumulation, and supporting demand-driven prioritisation, the proposed framework provides meaningful long-term stability, which is a critical requirement for high-demand public hospitals. 
In contrast, the baseline policy lacks an explicit feedback mechanism based on queue status. 
This can lead to non-compliance with waiting-time requirements and to outcomes that are locally acceptable but globally inefficient. 
Detailed queue trajectories for all subspecialities are provided in~\ref{app:queue_evolution}.

\subsubsection{Long-term OT Allocation Results}

This section analyses the long-term allocation decisions produced by the \otschedmodule~module described in Section~\ref{sec:ot}. 
Figure~\ref{fig:room_occupancy} reports monthly OT occupancy under the proposed framework and the baseline policy. 
Both approaches generate feasible allocations with substantial use of available capacity. 
This is consistent with the MILP formulation, which maximises block allocations while respecting theatre availability, subspeciality compatibility, and monthly capacity limits. 
Specifically, for the proposed framework, this result indicates that the scheduling layer is able to convert planned surgical demand into executable theatre allocations. 
In other words, the MILP successfully implements the strategic decisions produced by the queue-control and overtime modules, translating them into meaningful theatre plans.

\begin{figure}[htb]
    \centering
    \includegraphics[width=0.9\textwidth]{images/room_occupancy.png}
    \caption{Monthly operating-theatre occupancy under the proposed framework and the baseline policy. Occupancy is calculated as the number of allocated blocks in each month divided by the total number of available blocks. The baseline policy has no shaded area because its allocation decisions are independent of realised demand and queue states. As it attempts to allocate blocks to all subspecialities in every month, the resulting occupancy is deterministic across replications.}
    \label{fig:room_occupancy}
\end{figure}

The baseline policy systematically presents higher occupancy than the proposed framework. 
This difference is expected and reflects the distinct logic behind the two allocation policies. 
The baseline allocates OT capacity without explicitly considering whether each subspeciality queue is above or below any point. 
As a consequence, it attempts to schedule all subspecialities in every month, which leads to higher occupancy but does not prioritise subspecialities with growing queues.

By contrast, the proposed framework triggers allocation only when a subspeciality queue exceeds its reorder threshold $R$. 
When the queue is below this level, the framework does not force an allocation merely to increase theatre utilisation. 
Therefore, the lower occupancy observed in some months should not be interpreted as an inefficiency of the scheduling model. 
Rather, it reflects a deliberate control mechanism that avoids allocating capacity to subspecialities already operating within the desired region.

Figure~\ref{fig:specialities_allocated} supports this interpretation by showing the number of subspecialities allocated each month. 
The baseline assigns blocks to a relatively stable and high number of subspecialities, ranging from approximately $34$ to $39$ throughout the simulation horizon. 
In contrast, the proposed framework allocates fewer subspecialities in most months, with a more pronounced reduction in the second half of the year --- which is consistent with the initial waiting-list sizes of each subspeciality.
This analysis confirms that the difference in occupancy is directly associated with the queue-control mechanism: as more queues enter the controlled region, fewer subspecialities require active allocation.

\begin{figure}[htb]
    \centering
    \includegraphics[width=0.9\textwidth]{images/allocated_specialities.png}
    \caption{Number of subspecialities allocated per month under the proposed framework and the baseline policy. The baseline policy has no shaded area due to the same reason in Figure~\ref{fig:room_occupancy}: its allocation decisions are independent of realised demand and queue states. As it attempts to allocate blocks to all subspecialities in every month, the resulting occupancy is deterministic across replications.}
    \label{fig:specialities_allocated}
\end{figure}

The behaviour observed in Figure~\ref{fig:specialities_allocated} has important implications for long-term planning. 
A policy that attempts to allocate blocks to subspecialities with no explicit control mechanism may produce higher utilisation rates, but it can also operate patients from queues that are already under control while neglecting those from high accumulation queues. 
The proposed framework introduces a more controlled allocation logic: OT capacity is concentrated on subspecialities whose queues require intervention, while subspecialities below $R$ can temporarily remain unallocated.
This creates a controlled form of flexibility, allowing the OT scheduling MILP to redirect blocks over time according to the evolving state of the system rather than reproducing a fixed allocation pattern. This translates into a more effective use of the available capacity, while also guaranteeing long-term stability.


Additionally, the decision maker can verify that it is not even necessary to acquire more resources to meet waiting-time requirements or to reduce queues.
Instead, the proposed framework can achieve these goals by improving the efficiency of existing resources through better coordination and demand management.
This is a critical insight for high-demand public hospitals, where resource constraints are often a significant barrier to improving access and reducing waiting times. 
By implementing a more strategic and responsive scheduling approach, hospitals can make better use of their existing capacity, potentially reducing the need for costly expansions while still achieving meaningful improvements in patient outcomes and satisfaction.
\section{Conclusions and Future Directions\label{sec:conclusion}}

This paper proposes a modular framework that integrates strategic and tactical decisions for elective surgery planning under uncertainty. 
The framework combines waiting-list control, cancellation risk management, and tactical OT allocation within a comprehensive decision process. 
Rather than solving a single large-scale holistic model, the proposed approach decomposes the resolution into interconnected modules. 
This design is particularly relevant in practice, as it allows decision-makers to focus on specific aspects of the problem while still ensuring that decisions across different levels are aligned with long-term objectives.

We evaluate the proposed approach through a real-world case study in a large Brazilian public referral hospital. 
Results from the \overtimemodule~module provide a quantitative assessment of the trade-off between scheduling additional surgeries and exposing the system to higher cancellation risks. 
The findings indicate that treating surgery durations as stochastic rather than deterministic is crucial for realistic planning. 
Notoriously, when the number of planned surgeries per block increases, the probability of completing all scheduled surgeries decreases, and the risk of cancellations rises significantly.
By using this insight to determine the effective number of surgeries to be planned in a block, the number of surgeries defined by the \rqmodule~module can be translated into realistic block allocation targets that are achievable in practice.

The modified $(R,Q)$ control policy also played a pivotal role in the framework's performance. 
The target batch sizes defined for each subspeciality in the case study provide the necessary flexibility to absorb demand fluctuations.
Additionally, the policy's structure allows for a clear interpretation of the relationship between demand uncertainty, waiting-time requirements, and scheduling targets.
For instance, the results show that subspecialities with higher demand variability may require larger batch sizes to maintain queue stability, while those with more predictable demand can operate with more stable batch intervals. 
Hence, the policy supports a nuanced approach to queue management that is sensitive to the specific characteristics of each subspeciality, rather than applying a generic scheduling rule across the board.

The long-term simulation results further complement the analysis, highlighting the necessity of integrating waiting time management to tactical decisions. 
When OT scheduling ignores waiting-list dynamics and duration uncertainty, the resulting plans may appear attractive regarding short-term utilisation but perform poorly in practice. 
For instance, under a baseline myopic policy, many waiting lists grew uncontrollably in the experiment. 
In contrast, the proposed framework drives all waiting lists towards their respective reorder points $R$ and maintains stability over the simulation horizon. 

Furthermore, the results demonstrate that high theatre occupancy should not be interpreted in isolation: a policy may show high utilisation while allowing queues to deteriorate. Conversely, under the proposed framework, lower average occupancy in certain periods indicates that, once a subspeciality has reached its target queue level, the system can schedule fewer surgeries without jeopardising its waiting-time targets. 
This is a critical insight for hospital administrators, as it suggests that the OT capacity not used by subspecialities below their reorder points can be used for other purposes, such as emergency surgeries or maintenance, or reallocated to other subspecialities that are still above their targets.


This study has limitations that suggest avenues for future research. 
First, demand was modelled as a Poisson process based on historical data. 
This approach does not invalidate the framework, as the strategic modules can be adapted to any alternative demand models, but future applications that have available data about actual patient arrivals could incorporate even more realistic patient arrival processes. 
Second, the framework could be extended to include bed-management decisions in both upstream and downstream units. 
Integrating elective surgeries decisions with ward and ICU availability would help avoid situations in which planned surgeries are assigned to a subspeciality, but patients cannot be operated on because the required beds are unavailable. 
Finally, the framework could incorporate an operational decision layer to address the sequencing of surgeries within blocks, the allocation of specific surgical teams, and the real-time management of emergency cases on the day of surgery.

\section*{Acknowledgments}
\sloppypar
This work was supported by the National Council for Scientific and Technological Development (CNPq) [Grants 312345/2023-2, 404779/2025-5]; and São Paulo Research Foundation (FAPESP) [Grants 2022/05803-3, 2024/02641-8].

\appendix
\section{Evolution of the Queues\label{app:queue_evolution}}
In this appendix, we present the evolution of the queues for each subspeciality under the proposed framework and the baseline policy. 
Table~\ref{tab:queue_evolution} shows the initial and final queue sizes (mean and standard deviation) for each subspeciality, demonstrating how the proposed framework effectively stabilises the queues over time compared to the baseline policy, which leads to significant queue growth in several subspecialities.

\begin{table}[htb!]
    \centering
    \caption{Upper and lower bounds on the batch size $Q$ of the number of patients for each subspeciality, the reorder point $R$, and the initial and final queue sizes (mean and standard deviation) for each subspeciality, under the proposed framework and the baseline policy.\label{tab:queue_evolution}}
    \scriptsize
    \begin{tabular}{lcccccccc}
        \toprule
        \multicolumn{1}{l}{\multirow{2}{*}{\textbf{Subspecialty}}} & \multirow{2}{*}{\textbf{Initial Queue}} & \multirow{2}{*}{\textbf{$\lowq$}} & \multirow{2}{*}{\textbf{$\upq$}} & \multirow{2}{*}{\textbf{R}} & \multicolumn{4}{c}{\textbf{Final Queue}}              \\
        \multicolumn{1}{c}{}                                       &                                         &                                 &                                 &                             & Framework & Framework Std. & Baseline & Baseline Std. \\
        \midrule
        ear nose throat                                   & 330                            & 72                     & 219                    & 146                & 257.05    & 30.90          & 410.59   & 36.06         \\
        trauma orthop                                     & 192                            & 33                     & 100                    & 67                 & 130.54    & 21.01          & 294.73   & 21.73         \\
        neurology                                         & 180                            & 41                     & 125                    & 83                 & 135.33    & 20.52          & 88.57    & 23.85         \\
        paediatric                                        & 150                            & 31                     & 94                     & 63                 & 98.13     & 17.29          & 162.14   & 20.09         \\
        vascular                                          & 144                            & 37                     & 112                    & 75                 & 100.10    & 19.67          & 190.85   & 23.21         \\
        paediatric endoscopy                              & 144                            & 29                     & 87                     & 58                 & 69.70     & 9.61           & 252.41   & 24.44         \\
        cardiology                                        & 108                            & 36                     & 109                    & 73                 & 86.07     & 13.43          & 36.88    & 6.75          \\
        ophthalmology                                     & 108                            & 37                     & 112                    & 75                 & 92.02     & 16.23          & 214.95   & 21.21         \\
        obesity gastro                                    & 108                            & 29                     & 87                     & 58                 & 70.08     & 10.86          & 178.49   & 21.61         \\
        head and neck                                     & 90                             & 25                     & 75                     & 50                 & 62.15     & 8.84           & 141.80   & 18.41         \\
        thoracic                                          & 84                             & 27                     & 81                     & 54                 & 65.69     & 11.47          & 86.64    & 20.74         \\
        knee orthop                                       & 78                             & 13                     & 37                     & 25                 & 47.46     & 13.69          & 92.18    & 11.51         \\
        esd gastro                                        & 78                             & 17                     & 50                     & 34                 & 37.48     & 8.01           & 15.77    & 4.49          \\
        hand orthop                                       & 78                             & 19                     & 56                     & 38                 & 44.21     & 7.81           & 110.97   & 15.32         \\
        paediatric orthop                                 & 72                             & 17                     & 50                     & 34                 & 40.75     & 6.89           & 63.30    & 15.12         \\
        hip orthop                                        & 72                             & 11                     & 31                     & 21                 & 24.44     & 5.82           & 92.52    & 11.05         \\
        lithiasis urology                                 & 60                             & 9                      & 25                     & 17                 & 17.92     & 5.22           & 10.10    & 5.26          \\
        proctology                                        & 60                             & 21                     & 62                     & 42                 & 49.72     & 7.87           & 20.47    & 4.57          \\
        adult endoscopy                                   & 56                             & 16                     & 47                     & 32                 & 40.38     & 7.95           & 46.32    & 14.36         \\
        local vascular                                    & 54                             & 22                     & 65                     & 44                 & 51.51     & 8.09           & 22.82    & 5.04          \\
        gynaeco urology                                   & 54                             & 11                     & 31                     & 21                 & 23.52     & 6.17           & 86.75    & 11.36         \\
        pacemaker                                         & 48                             & 13                     & 37                     & 25                 & 31.74     & 5.54           & 174.23   & 12.93         \\
        plastic                                           & 48                             & 18                     & 53                     & 36                 & 40.21     & 7.45           & 16.84    & 4.54          \\
        andrology                                         & 42                             & 16                     & 47                     & 32                 & 37.00     & 7.81           & 35.83    & 13.59         \\
        spine orthop                                      & 42                             & 15                     & 43                     & 29                 & 36.26     & 6.25           & 34.53    & 12.12         \\
        kidney gastro                                     & 42                             & 14                     & 40                     & 27                 & 30.89     & 6.10           & 13.40    & 3.94          \\
        functional urology                                & 36                             & 11                     & 31                     & 21                 & 24.75     & 5.21           & 11.44    & 4.73          \\
        tumour orthop                                     & 30                             & 11                     & 31                     & 21                 & 24.00     & 3.89           & 13.30    & 6.22          \\
        paediatric urology                                & 30                             & 7                      & 18                     & 13                 & 14.30     & 3.19           & 70.18    & 8.61          \\
        gdp gastro                                        & 30                             & 8                      & 21                     & 15                 & 17.07     & 4.19           & 11.35    & 5.13          \\
        oncological urology                               & 30                             & 11                     & 31                     & 21                 & 25.80     & 5.30           & 21.20    & 9.14          \\
        spine neurology                                   & 30                             & 12                     & 34                     & 23                 & 27.70     & 5.76           & 30.91    & 12.57         \\
        vascular urology                                  & 24                             & 5                      & 12                     & 9                  & 11.13     & 2.75           & 15.11    & 6.91          \\
        dermatology                                       & 22                             & 3                      & 6                      & 5                  & 6.34      & 2.10           & 32.64    & 5.24          \\
        urology transplant                                & 18                             & 5                      & 12                     & 9                  & 10.31     & 3.71           & 66.85    & 6.76          \\
        foot orthop                                       & 18                             & 5                      & 12                     & 9                  & 9.41      & 3.53           & 13.31    & 6.51          \\
        trauma                                            & 16                             & 5                      & 12                     & 9                  & 10.61     & 2.68           & 8.06     & 5.74          \\
        local pacemaker                                   & 12                             & 4                      & 9                      & 7                  & 6.38      & 2.81           & 6.28     & 2.03          \\
        shoulder orthop                                   & 12                             & 3                      & 6                      & 5                  & 5.00      & 1.91           & 2.63     & 1.88          \\
        bone marrow transplant                            & 12                             & 3                      & 6                      & 5                  & 4.49      & 2.24           & 4.63     & 2.38   \\
    \bottomrule
    \end{tabular}
\end{table}

\bibliographystyle{elsarticle-harv}
\scriptsize{\bibliography{ref}}
\end{document}